\newtheorem{theorem}{Theorem}
\newtheorem{lemma}{Lemma}
\newenvironment{proof}{{\noindent\it Proof.}\,}{\hfill $\square$\par}
\def\BibTeX{{\rm B\kern-.05em{\sc i\kern-.025em b}\kern-.08em
    T\kern-.1667em\lower.7ex\hbox{E}\kern-.125emX}}
\begin{document}

\title{Result and Congestion Aware Optimal Routing and Partial Offloading in Collaborative Edge Computing}

\author{
\IEEEauthorblockN{Jinkun Zhang, Yuezhou Liu, Edmund Yeh}
\IEEEauthorblockA{Department of Electrical and Computer Engineering, Northeastern University,
Boston, US\\
zhang.jinku@northeastern.edu, liu.yuez@northeastern.edu, eyeh@ece.neu.edu}
}%

\maketitle

\vspace{1cm}


\begin{abstract}
Collaborative edge computing (CEC) is an emerging paradigm where heterogeneous edge devices (stakeholders) collaborate to fulfill computation tasks, such as model training or video processing, by sharing communication and computation resources. 
Nevertheless, the optimal data/result routing and computation offloading strategy in CEC with arbitrary topology still remains an open problem. 
In this paper, we formulate a partial-offloading and multi-hop routing model for arbitrarily divisible tasks.
Each node individually decides the computation of the received data and the forwarding of data/result traffic.
In contrast to most existing works, our model applies for tasks with non-negligible result size, and enables separable data sources and result destination.
We propose a network-wide cost minimization problem with congestion-aware cost to jointly optimize routing and computation offloading. This problem covers various performance metrics and constraints, such as average queueing delay with limited processor capacity.
Although the problem is non-convex, we provide non-trivial necessary and sufficient conditions for the global-optimal solution, and devise a fully distributed algorithm that converges to the optimum in polynomial time, 
allows asynchronous individual updating, and is adaptive to changes in network topology or task pattern.
Numerical evaluation shows that our proposed method significantly outperforms other baseline algorithms in multiple network instances, especially in congested scenarios.

\end{abstract}



\section{Introduction} \label{Section:Intro}

In recent years, we are experiencing an explosive increment in the number of mobile and IoT devices. Our daily life is increasingly involved with new mobile applications for work, entertainment, social networking, health care, etc. Many of them are computation-intensive and time-critical, such as AR/VR and autonomous driving. Meanwhile, mobile devices running these applications generate a huge amount of data traffic which is predicted to reach 288EB per month in 2027 \cite{Ericsson2021report}. 
It becomes impractical to direct all computation requests and their data to central cloud due to limited backhaul bandwidth and high latency.
Edge computing is then proposed as a promising solution to provide computation resources and cloud-like services in close proximity to the mobile devices.   

In edge computing, requesters offload their computation to the edge servers. A new concept, extending this idea, is the so-called collaborative edge computing (CEC). Besides point-to-point offloading, CEC let multiple stakeholders (mobile devices, IoT devices, edge servers, or cloud) collaborate with each other by sharing data, communication resources, and computation resources to finish computation tasks \cite{sahni2020multi}. On the one hand, CEC improves the utilization efficiency of resources so that computation-intensive and time-critical services can be better completed at the edge. Mobile devices equipped with computation capabilities can collaborate with each other through D2D communication \cite{sahni2017edge}. Edge servers can also collaborate with each other 
for load balancing or further with central cloud to offload demands that they cannot accommodate \cite{zhu2017socially}. On the other hand, CEC is needed when there is no direct connection between devices and edge servers. Consider unmanned aerial vehicle (UAV) swarms or autonomous cars in rural areas, computation-intensive tasks of UAVs or cars far away from the wireless access point should be collaborated computed or offloaded through multi-hop routing to the edge server with the help of other devices \cite{hong2019multi,sahni2017edge}. 

We wish to study a general framework of CEC which enables varies types of collaboration among stakeholders. 
In particular, we consider a multi-hop network with arbitrary topology, where the nodes collaboratively finish multiple computation tasks. Nodes have heterogeneous computation capabilities and some are also data sources (sensors and mobile users) that generate data for computation tasks. Each task has a requester node for the computation result. 
We allow partial offloading, where the computation of a task can be partitioned and conducted by multiple nodes, by considering 
data-driven and
divisible tasks, i.e., 
a task can be divided arbitrarily into sub-tasks that compute different parts of the input data.
Finishing a task requires the routing of data from possibly multiple data sources to multiple nodes for computation and the routing of results to the requester. We aim for a joint routing (how to route the data/result) and computation offloading (where to compute) strategy for all tasks that minimizes the total communication and computation costs.
This problem is non-trivial, but we are able to find the optimal joint routing and computation strategy in the proposed framework: In particular, we provide the necessary and sufficient conditions for the global optimum and propose an distributed and adaptive algorithm that converges to the global optimum with small information exchange overhead.

Compared with existing edge computing and CEC studies, our work is distinct in several aspects. First, 
while most works study only offloading-type computation where requesters offload their data for computation \cite{hong2019multi,liu2020distributed,al2016distributed},
we allow data sources and requesters be different nodes to cover computation that requires fetching data from other nodes (like sensors) besides the requester itself. Sahni et al. \cite{sahni2017edge,sahni2018data} also consider arbitrary data sources, but are restricted to fully connected networks or predefined routing protocols.
Second, while we study routing for both data and result flows, most existing works ignore the cost for transmitting results \cite{luo2021qoe,sahni2020multi,shi2019area} and simply let results transmitted along the reverse path of data \cite{he2021multi,funai2019computational}. However, the size of computation result is not negligible in many applications. For example, the communication cost of intermediate results is one major concern in federated or distributed machine learning \cite{mcmahan2017communication}, and result size can even be larger than data size in applications like decompression and image enhancement.
Third, we model congestion in our framework by considering non-linear communication and comuputation costs instead of linear costs as in \cite{hong2019multi,hong2019qos}. 
Finally, the existence of optimal solution, and distributed and adaptive algorithm makes our framework possible to be implemented in practical distributed edge networks with provable performance guarantee and robustness to network condition changes. 

Our detailed contributions are as follows:
\begin{itemize}
    \item To the best of our knowledge, we formulate the fist framework that jointly studies partial offloading and routing for both data and result in arbitrary CEC networks with congestible links. 
    \item We provide the global optimal routing and offloading strategy for this non-convex problem, by studying the necessary and sufficient optimality condition.
    \item We devise a distributed, asynchronous, and adaptive algorithm that converges to the global optimum. 
    \item By extensive experiments, we show the advantages of the proposed algorithm over several baselines in different network topologies.
\end{itemize}
The remainder of this paper is organized as follows. We present our model of a CEC network and formulate the optimization problem in Section~\ref{Section:Model}. The necessary and sufficient optimality conditions are discussed in Section~\ref{Section: Optimality}, while the proposed algorithm is presented in Section~\ref{Section: Algorithm}. We summarize our experiment results in Section~\ref{Section:Simulation} and finally conclude in Section~\ref{Section:Conclusion}.

\section{Network Model and Problem Formulation} \label{Section:Model}

We begin by presenting our formal model of a collaborative edge computing network where multiple stakeholders collaborate to finish computation tasks. Such network is motivated by several real-word applications such as IoT networks, connected vehicles and UAV swarms.
An example system that involves IoT network on the edge is shown in Fig. \ref{fig1}. 

\subsection{Network model }

We consider a quasi-static network, represented by a directed and strongly connected graph $G = (V,E)$ where $V$ is the set of nodes (devices with either task, data or computation resource to share) and $E$ is the set of links. 
For nodes $i, j \in V$, there exists link $(i,j) \in E$ if a feasible connection from $i$ to $j$ is available for data or result forwarding. Denote $\mathcal{I}(i)$ the set of all incoming nodes $j\in V$ of $i$ such that $(j,i) \in E$, and $\mathcal{O}(i)$ the set of outgoing nodes $j \in V$ of $i$ such that $(i,j) \in E$.

\begin{figure}[htbp]
\centerline{\includegraphics[width=0.35\textwidth]{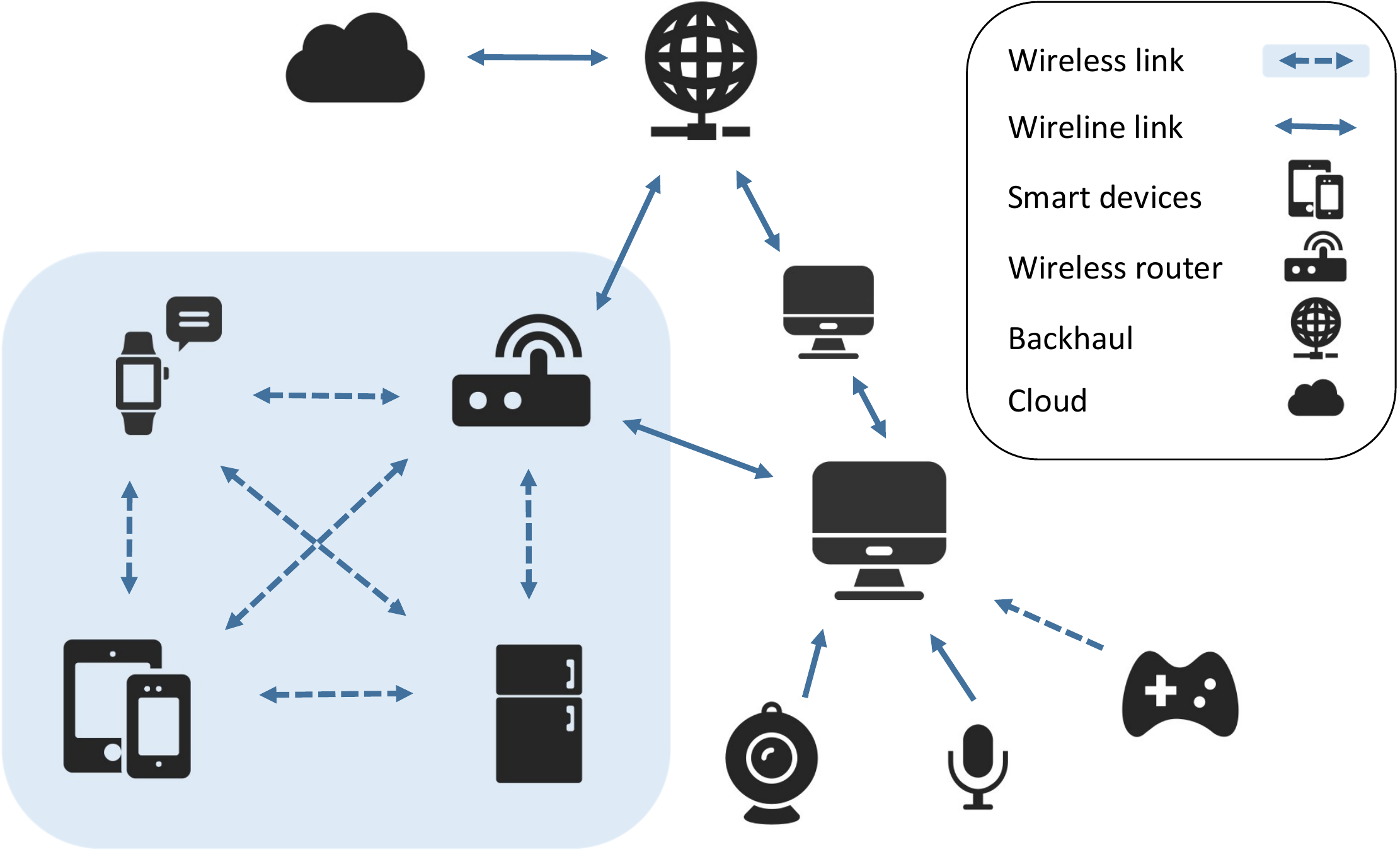}}
\vspace{-0.3\baselineskip}
\caption{Sample system topology involving IoT network on the edge}
\label{fig1}
\vspace{-0.5\baselineskip}
\end{figure}

We consider general computation tasks that map input data to result of non-negligible size, including image/video compression, message encoding/decoding, model training\footnote{Model training usually involves multiple training epochs, in each the gradient of the loss function for every data point in the dataset is computed.}, etc.
A task involves the computation of data generated by multiple data sources and the computation results being forwarded to a given destination (the requester).
Specifically, we assume the computation performed in the network are of $M$ different types, and denote $S$ the set of all task. A task is represented by a pair $(d,m) \in S$, where $d \in V$ is the result destination and $m \in [M]$ is a specified computation type\footnote{We denote $[M]$ the set of integers $\{1, 2,\dots,M\}$ throughout the paper.}. Let $r_i(d,m) \geq 0$ be the exogenous input rates of the data corresponding to task $(d,m)$ at data source node $i \in V$.
With this model we enable multi-source-single-destination tasks, namely multiple nodes $i$ with $r_i(d,m)>0$ are allowed for a task $(d,m)$, e.g., a monitor process involving multiple sensors. The input rate $r_d(d,m)$ at destination could also be positive, incorporating applications with locally provided data and computation offloading.



\begin{table}[t!]
\footnotesize
\begin{center}
\begin{tabular}{l | l }
\hline
$G = (V,E)$ & Network graph $G$, set of nodes $V$ and links $E$\\
$M$ & Total number of computation types\\
$t = (d,m)$ & Task $t$, destination $d$ and computation type $m$ \\
$S$ & Set of all tasks\\
$r_i(d,m)$ & Input data rate of task $(d,m)$ at node $i$ \\
$\mathcal{O}(i)$, $\mathcal{I}(i)$ & Out-neighbors and in-neighbors of node $i$\\
$f^-_{ij}(d,m)$ & Data flow of task $(d,m)$ on link $(i,j)$\\
$f^+_{ij}(d,m)$ & Result flow of task $(d,m)$ on link $(i,j)$\\
$g_i(d,m)$ & Flow assigned to computation of $(d,m)$ at $i$\\
$a_m$ & Result size per unit input data of computation $m$ \\
$t^-_{i}(d,m)$ & Total data flow of $(d,m)$ at $i$ \\
$t^+_{i}(d,m)$ & Total result flow of $(d,m)$ at $i$ \\
$\phi^-_{ij}(d,m)$ & Fraction of $t^-_{i}(d,m)$ forwarded to node $j$, $j \neq 0$ \\
$\phi^-_{i0}(d,m)$ & Fraction of $t^-_{i}(d,m)$ assigned to computation at $i$ \\
$\phi^+_{ij}(d,m)$ & Fraction of $t^+_{i}(d,m)$ forwarded to node $j$\\
$F_{ij}$ & Total flow on link $(i,j)$\\
$g_i^m$ & Computation input amount of task type $m$ at $i$\\
$\bf{G}_i$ & Vector of computation input for all task types at $i$ \\
$D_{ij}(F_{ij})$ & Communication cost (e.g. queueing delay) on $(i,j)$\\
$C_i(\bf{G}_i)$ & Computation cost (e.g. CPU time) at node $i$ \\
$D$ & Sum of all communication and computation costs\\
\hline
\end{tabular}
\end{center}
\vspace{-1\baselineskip}
\caption{ Major notations used in network model}
\vspace{-1\baselineskip}
\label{table:tab1}
\end{table}

\subsection{Routing and computation strategy}
The injecting data flows of each task are routed to nodes with computation resources to be computed. After computation, result flows are generated and routed to corresponding destination nodes.
Since data and computation result are simultaneously forwarded in the network, we distinguish them by superscript $-$ and $+$, respectively.

\noindent\textbf{Data and result flows.} We consider a hop-by-hop routing scheme: Let $f_{ij}^-(d,m) \geq 0$ denote the data flow of task $(d,m)$ on link $(i,j)$ and $f_{ij}^+(d,m) \geq 0$ denote the result flow on the same link.
Let $g_i(d,m) \geq 0$ denote the computation flow at node $i$ corresponding to task $(d,m)$, i.e., the data flow forwarded to the processor of node $i$ for computation. 
We consider that the
result flow generated by computation
for type $m$ is a non-decreasing function $\gamma_m(\cdot)$ of the corresponding computation flow, given as
In particular, we assume that $\gamma_m(\cdot)$ is a non-negative weighted sum of a linear 
function and a sign function. Let $\mathbbm{1}_{A} = 1$ if statement $A$ is true and $0$ otherwise, then
   $$ \gamma_m(g_i(d,m)) = a_m g_i(d,m) + b \mathbbm{1}_{g_i(d,m)>0}, $$
where $a_m \geq 0$ is the ratio of result size versus data, $b \geq 0$ is a fixed overhead due to 
task partition
or the nature of computation. {In this paper, we focus on case $b = 0$.} Such assumption fits most of the computation required in modern applications with usually $a_m \leq 1$. 
{Whereas we also allow $a_m > 1$, representing special types of computation with result size larger than the input data size, e.g., video rendering, image super-resolution or file decompression.\footnote{We defer the analysis of case $b \neq 0$ to our future work. As an expansion, subject to minor modifications, the main mathematical conclusions of this paper apply for cases where $\gamma_m(\cdot)$ is increasing and convex.}}

Let $t_i^-(d,m) $ denote the total data traffic of task $(d,m)$ forwarded and injected to node $i$, and $t_i^+(d,m)$ denote the total result traffic forwarded to and generated at node $i$, given by the following 
\begin{align}
      t_i^-(d,m) &= \sum\nolimits_{j \in \mathcal{I}(i)}f_{ji}^-(d,m) + r_i(d,m), \label{ti-}
    \\t_i^+(d,m) &= \sum\nolimits_{j \in \mathcal{I}(i)}f_{ji}^+(d,m) + \gamma_m(g_i(d,m)). \label{ti+}
\end{align}

\noindent\textbf{Routing and computation strategy.} To describe the computation and forwarding scheme in a distributed fashion, we assume that each node has two virtual router, one for data flow and one for result flow, as shown in Fig. \ref{fig2}. Let $\phi_{ij}^-(d,m)$, $\phi_{ij}^+(d,m)\in [0,1] $ be the fraction of data or result flow of task $(d,m)$ at node $i$ forwarded to node $j$. 
For a coherent notation, we also use $\phi_{i0}^-(d,m)\in [0,1]$ to denote the fraction of computation flow at node $i$, and use $\boldsymbol{\phi}^- = [\phi_{ij}^-(d,m)]_{i\in V, j\in\{0\}\cup\mathcal{O}(i), (d,m)\in S}$, $\boldsymbol{\phi}^+ = [\phi_{ij}^+(d,m)]_{(i,j)\in E, (d,m)\in S}$ and $\boldsymbol{\phi} = (\boldsymbol{\phi}^-,\boldsymbol{\phi}^+)$ to represent the system-wide 
routing and computation strategy.
To ensure all tasks are fulfilled, all injected data must be computed and the result must be delivered to the destination, given by the following flow conservation: 
for all $(d,m) \in S$ and $i\in V$, the data traffic is either computed or forwarded, 
\begin{align}
    &f_{ij}^-(d,m) = t_i^-(d,m) \phi_{ij}^-(d,m), \label{FlowConservation1.1}
    \\ &g_i(d,m) = t_i^-(d,m) \phi_{i0}^-(d,m), \label{FlowConservation1.2}
    \\ &\sum\nolimits_{j \in \left\{0\right\} \bigcup \mathcal{O}(i) } \phi_{ij}^-(d,m) = 1, \label{FlowConservation1.3}
\end{align}
and for the result traffic, the destination is a sink:
\begin{align}
    & f_{ij}^+(d,m) = t_i^+(d,m) \phi_{ij}^+(d,m), \label{FlowConservation2.1}
    \\&  \sum_{j \in \mathcal{O}(i) } \phi_{ij}^+(d,m) = \begin{cases} 
    1, \quad \text{if } i \neq d, 
    \\ 0, \quad \text{if } i = d.
    \end{cases} 
    \label{FlowConservation2.2}
\end{align}

\begin{figure}
\centerline{
\includegraphics[width=0.4\textwidth]{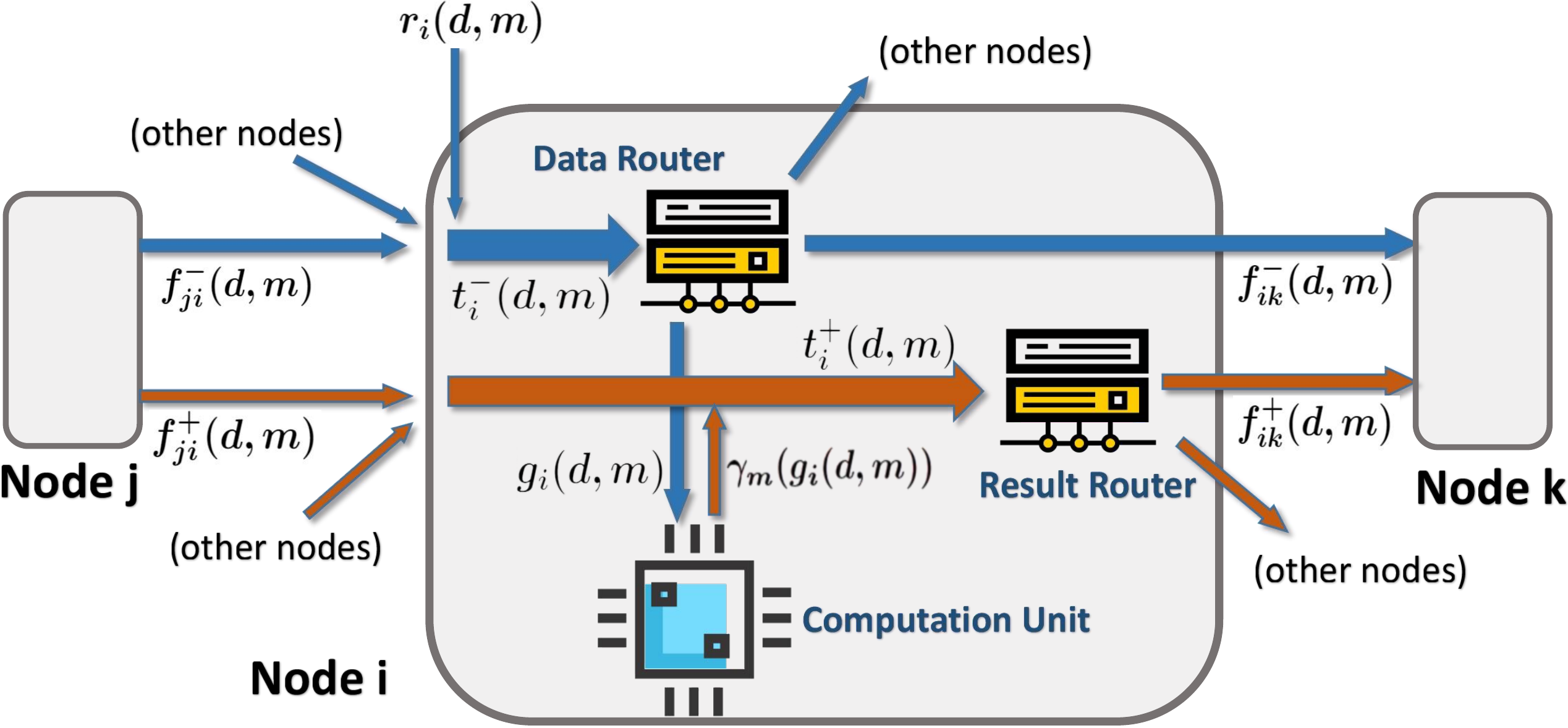}}
\vspace{-0.2\baselineskip}
\caption{Example of data and result flows for nodes $j \to i \to k$ }
\label{fig2}
\vspace{-0.8\baselineskip}
\end{figure}

\subsection{Communication and computation cost}
For mathematical formulation, we 
ignore the request messages sent by the requester to data sources, as the size of such messages are typically negligible compared to data or result, and could be delivered using a separate channel.

Instead of sharp bandwidth or computation capacity constraint in \cite{liu2019joint} or linear costs in \cite{ren2019collaborative}, we assign convex costs for D2D communication and local computation depending on corresponding flow rate, a more general 
assumption reflecting network congestion status.
In particular, define the total flow on link $(i,j) \in E$ as
$$\textstyle F_{ij} = \sum_{(d,m) \in S} \left(f_{ij}^+(d,m) + f_{ij}^-(d,m)\right),$$ 
 we assume the communication cost on $(i,j)$ is $D_{ij}(F_{ij})$, where $D_{ij}(\cdot)$ is increasing, continuously differentiable and convex. 
Similarly, denote $\boldsymbol{G}_i$ the data rate vector of the amount of computation for each type performed at node $i$, i.e.,
$$ \textstyle \boldsymbol{G}_i = \left[g_i^1, g_i^2, \cdots, g_i^M\right],$$
where $g_i^m$ is the total computation amount of type $m$, 
$$\textstyle g_i^m = \sum_{d:(d,m) \in S}g_i(d,m).$$ 
We denote $C_i(\boldsymbol{G}_i)$ the cost for node $i$ to fulfill the computation load $\boldsymbol{G}_i$, where function $C_i(\cdot)$ is an increasing, continuously differentiable and convex multivariable function, i.e., increasing on every coordinate and jointly convex in $g_i^1, \cdots, g_i^M$.

{ 
Such assumption of cost functions incorporates a variety of existing performance metrics.
For example, $D_{ij}(F_{ij}) = {F_{ij}}/\left({\mu_{ij}-F_{ij}}\right)$ gives the average number of packets waiting for or under transmission at link $(i,j)$, provided that $\mu_{ij}$ is the service rate in an M/M/1 queue model\cite{bertsekas2021data} and $F_{ij} < \mu_{ij}$. 
One could also approximate the sharp capacity constraint $F_{ij} \leq C_{ij}$ (e.g., in \cite{liu2019joint}) by a smooth convex function that goes to infinity when approaching to capacity limit $C_{ij}$. 

Note that we measure the computation cost $C_i(\boldsymbol{G}_i)$ as a function of data rate, which is adopted in network function virtualization (NFV) studies\cite{zhang2018optimal}, but most previous computation offloading researches \cite{hong2019multi}\cite{chen2021edgeconomics} assign data amount and computation workload separately for a task. 
Our formulation can be reduced to the latter if choose certain cost function. 
For instance, consider the scenario where users make requests with input data size $d_m$ (bits) and computation workload $c_m$ (CPU cycles). 
By setting $C_i(\boldsymbol{G_i}) = \sum_{m \in [M]} c_m g_i^m / d_m$, we measure CPU cycles as cost. Or by setting $v_i^m$ to be the computation speed of type $m$ at $i$ and $C_i(\boldsymbol{G_i}) = \sum_{m \in [M]} c_m g_i^m / (d_m v_i^m)$, we measure CPU runtime as cost.


Note that for a network with heterogeneous computation resources, our formulation is even more flexible than that in \cite{sahni2020multi}\cite{chen2021edgeconomics}, where there a task always assigns the same computation workload wherever it is computed. In fact, our model captures the fact that in practical network edge, the workload for a certain task may be very different depending on where to perform it, e.g., some parallelizable computation is easier at nodes equipping GPU, but slower at others.
}

\subsection{Joint routing and computation offloading problem}
  In this paper, we aim at minimizing the overall cost of edges and devices for both communication and computation,
\begin{align}
    \min_{\boldsymbol{\phi}} \quad & D = \sum_{(i,j) \in E}D_{ij}(F_{ij}) + \sum_{i \in V} C_i(\boldsymbol{G}_i) \label{JointProblem_nodebased}
    \\ \text{such that} \quad & \text{(\ref{ti-}) to (\ref{FlowConservation2.2}) hold} \nonumber
\end{align}
Note that problem (\ref{JointProblem_nodebased}) is not convex in $\boldsymbol{\phi}$. We will demonstrate by example that standard gradient-based methods solving for KKT points highly likely generate sub-optimal solutions. 


\section{Sufficient Optimality Condition} \label{Section: Optimality}

In this section, we provide necessary and sufficient conditions for the global optimum of problem (\ref{JointProblem_nodebased}). 
Our analysis follows \cite{gallager1977minimum}, while we make non-trivial extensions for considering both data and result flows, as well as in-network computation.
Remind that we focus on the case with $b = 0$, where the size of computation result is in proportion to the data size, i.e., $\gamma_m(g_i(d,m)) = a_m g_i(d,m)$. 

We start by giving closed-form expressions to the derivatives of the total cost $D$. 
For an increment of exogenous data flow $r_i(d,m)$, the increase of $D$ (i.e. the marginal cost) is caused by two aspects, (1) the cost of forwarding extra data flow to $i$'s outgoing neighbors $j\in\mathcal{O}(j)$, and (2) the cost of assigning extra computation load at $i$'s computation unit. Note that the first aspect 
could be further decomposed into two terms, the extra cost on the out-link $(i,j)$  and the extra cost at the next-hop $j$, and similarly could the second aspect.
Thus formally, the marginal cost at node $i$ is given as
\begin{equation}
\begin{aligned}
    \frac{\partial D}{ \partial r_i(d,m)} &= \sum_{j \in \mathcal{O}(i)} \phi_{ij}^-(d,m) \left[D_{ij}^\prime(F_{ij}) + \frac{\partial D}{ \partial r_j(d,m)}\right]
    \\ &+ \phi_{i0}^-(d,m) \left[\frac{\partial C_i(\boldsymbol{G_i})}{\partial g_i^m} + a_m \frac{\partial D}{\partial t_i^+(d,m)}\right],
    \label{partial_D_r}
\end{aligned}
\end{equation}
where ${\partial D}/{\partial t_i^+(d,m)}$ denotes the marginal cost corresponding to an increment of result traffic at $i$. Similarly to (\ref{partial_D_r}), this marginal of the result traffic is a weighted sum of extra costs at out-links and at next-hope nodes, given as 
\begin{align}
    \frac{\partial D}{\partial t_i^+(d,m)} = \sum_{j \in \mathcal{O}(i)} \phi_{ij}^+(d,m) \left[D_{ij}^\prime(F_{ij}) +  \frac{\partial D}{\partial t_j^+(d,m)}\right].
    \label{partial_D_t}
\end{align}
Note that \eqref{partial_D_r} and \eqref{partial_D_t} can be calculated recursively, whereas we defer the detailed mechanism to Section \ref{Section: Algorithm}.
Meanwhile, with an increment of $\phi^-_{ij}$ or $\phi^+_{ij}$, the extra cost could also be decomposed in the similar way as in \eqref{partial_D_r} or \eqref{partial_D_t}, respectively. 
\begin{align}
    \frac{\partial D}{\partial \phi_{ij}^-(d,m)} &= 
    \begin{cases}
    t_i^-(d,m) \left[D_{ij}^\prime(F_{ij}) + \frac{\partial D}{\partial r_j(d,m)}\right], \text{if } j \neq 0
    \\ t_i^-(d,m) \left[ \frac{\partial C_i(\boldsymbol{G_i})}{\partial g_i^m} + a_m \frac{\partial D}{\partial t_i^+(d,m)} \right], \text{ if } j =0
    \end{cases} \label{partial_D_phi-}
    \\ \frac{\partial D}{\partial \phi_{ij}^+(d,m)} &= 
     t_i^+(d,m) \left[D_{ij}^\prime(F_{ij}) + \frac{\partial D}{\partial t_j^+(d,m)} \right].\label{partial_D_phi+}
\end{align}

Then we could solve (\ref{JointProblem_nodebased}) by minimizing the Lagrangian
\begin{equation}
\begin{aligned}
    & L(\boldsymbol{\phi},\lambda^-,\lambda^+) = D 
    \\ &- \sum_{i \in V} \sum_{(d,m) \in S} \lambda_{i,(d,m)}^-(\sum_{j \in \left\{ 0\right\} \cup \mathcal{O}(i) } \phi_{ij}^-(d,m) - 1) 
    \\ &- \sum_{i \in V} \sum_{(d,m) \in S}\lambda_{i,(d,m)}^+(\sum_{j \in \mathcal{O}(i) } \phi_{ij}^+(d,m) - \mathbbm{1}_{i \neq d})
\end{aligned}    
\label{Lagrangian}
\end{equation}
subject to the constraints $\phi_{ij}^-(d,m) \geq 0$ and $\phi_{ij}^+(d,m) \geq 0$.
By setting the derivative of $L$ to $0$, a KKT necessary condition of the global minimizer to (\ref{JointProblem_nodebased}) is given by Lemma \ref{Lemma_Necessary}.

\begin{lemma} \label{Lemma_Necessary}
Let $b = 0$ and $\boldsymbol{\phi}^-,\boldsymbol{\phi}^+$ be the global solution that minimizes (\ref{JointProblem_nodebased}), then for all $i$ and $(d,m)$, and for all $j\in\left\{0\right\} \cup \mathcal{O}(i)$ w.r.t. data flow or  $j\in\mathcal{O}(i)$ w.r.t. result flow,
\begin{align*}
    &\frac{\partial D}{ \partial \phi_{ij}^-(d,m)}  
    \begin{cases}
     = \min\limits_{k \in \left\{0\right\} \cup \mathcal{O}(i) } \frac{\partial D}{ \partial \phi_{ik}^-(d,m)} , \quad \text{if } \phi_{ij}^-(d,m) >0,
    \\ \geq \min\limits_{k \in \left\{0\right\} \cup \mathcal{O}(i) } \frac{\partial D}{ \partial \phi_{ik}^-(d,m)} ,  \quad \text{if } \phi_{ij}^-(d,m) =0,
    \end{cases} 
   \\ &\frac{\partial D}{ \partial \phi_{ij}^+(d,m)}  
     \begin{cases}
     = \min\limits_{k \in \mathcal{O}(i) } \frac{\partial D}{ \partial \phi_{ik}^+(d,m)} , \quad \text{if } \phi_{ij}^+(d,m) >0,
    \\ \geq \min\limits_{k \in \mathcal{O}(i) } \frac{\partial D}{ \partial \phi_{ik}^+(d,m)}, \quad \text{if } \phi_{ij}^+(d,m) =0.
    \end{cases} 
\end{align*}
\end{lemma}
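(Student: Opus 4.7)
The plan is to treat Lemma~\ref{Lemma_Necessary} as a direct application of KKT stationarity to the Lagrangian (\ref{Lagrangian}), exploiting the fact that although the objective $D$ is non-convex in $\boldsymbol{\phi}$, the only constraints in (\ref{JointProblem_nodebased}) involving the routing variables are the simplex constraints (\ref{FlowConservation1.3}) and (\ref{FlowConservation2.2}) together with the box constraints $\phi_{ij}^{\pm}(d,m)\ge 0$, all of which are \emph{linear} in $\boldsymbol{\phi}$. Linearity of the constraints implies the Abadie constraint qualification holds at every feasible point, so that any local (and hence global) minimizer must satisfy the KKT conditions.

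First I would fix an arbitrary node $i$ and task $(d,m)$ and isolate the portion of the Lagrangian depending on the variables $\{\phi_{ij}^-(d,m)\}_{j\in\{0\}\cup\mathcal{O}(i)}$ (the result-flow case is analogous). Differentiating (\ref{Lagrangian}) and using the already-derived closed form (\ref{partial_D_phi-}) for $\partial D/\partial \phi_{ij}^-(d,m)$, the stationarity condition reads
\[
\frac{\partial D}{\partial \phi_{ij}^-(d,m)} - \lambda^-_{i,(d,m)} - \mu^-_{ij,(d,m)} = 0,
\]
where $\mu^-_{ij,(d,m)} \ge 0$ is the multiplier attached to $\phi_{ij}^-(d,m)\ge 0$, together with complementary slackness $\mu^-_{ij,(d,m)}\,\phi_{ij}^-(d,m) = 0$. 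Splitting into the two cases $\phi_{ij}^-(d,m)>0$ (forcing $\mu^-_{ij,(d,m)}=0$ and hence equality with $\lambda^-_{i,(d,m)}$) versus $\phi_{ij}^-(d,m)=0$ (giving $\partial D/\partial\phi_{ij}^-(d,m) \ge \lambda^-_{i,(d,m)}$) immediately yields that $\lambda^-_{i,(d,m)}$ equals the common value of $\partial D/\partial\phi_{ij}^-(d,m)$ over active coordinates, and is a lower bound on all inactive ones. Because the feasible set $\sum_j \phi_{ij}^-(d,m)=1$, $\phi_{ij}^-(d,m)\ge 0$ is nonempty, at least one coordinate must be strictly positive, so $\lambda^-_{i,(d,m)}$ is actually attained, i.e., $\lambda^-_{i,(d,m)} = \min_{k\in\{0\}\cup\mathcal{O}(i)} \partial D/\partial \phi_{ik}^-(d,m)$. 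Substituting this identification back gives the two-case statement in the lemma. The argument for $\phi_{ij}^+(d,m)$ is identical, using (\ref{partial_D_phi+}) and noting that when $i=d$ the constraint forces every $\phi_{ij}^+(d,m)=0$ so the statement is vacuous.

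The main technical point to justify carefully is not the algebra above (which is routine once the Lagrangian is written down), but the applicability of KKT in the face of non-convexity: I would explicitly invoke that linearly-constrained problems satisfy Abadie's constraint qualification, so that KKT is a \emph{necessary} condition at any local optimum, a fortiori at the global optimum assumed in the statement. A secondary subtlety worth a sentence is that the derivatives $\partial D/\partial \phi_{ij}^{\pm}(d,m)$ appearing in (\ref{partial_D_phi-})--(\ref{partial_D_phi+}) are themselves defined through the recursions (\ref{partial_D_r})--(\ref{partial_D_t}); one should note that these recursions terminate (because the underlying flow assignment fulfills the conservation laws (\ref{FlowConservation1.1})--(\ref{FlowConservation2.2}) on a finite graph and hence deliver well-defined, finite partial derivatives) so that the KKT stationarity equation is meaningful. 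With these two points dispatched, the lemma reduces to the standard fact that the KKT multiplier of a simplex constraint coincides with the pointwise minimum of the gradient entries.
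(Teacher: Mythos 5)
Your proposal is correct and follows essentially the same route as the paper, which simply forms the Lagrangian (\ref{Lagrangian}) and invokes KKT stationarity with the nonnegativity multipliers to read off the two-case condition; your write-up merely spells out the details (constraint qualification via linearity, complementary slackness, identification of $\lambda^-_{i,(d,m)}$ with the minimum over the simplex coordinates) that the paper leaves implicit. The only caveat worth a word is that well-definedness of the recursive marginals (\ref{partial_D_r})--(\ref{partial_D_t}) rests on loop-freedom of the routing rather than on flow conservation alone, but this does not affect the argument.
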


Note that the condition in Lemma \ref{Lemma_Necessary} is not a sufficient condition for optimality. A toy example for such non-sufficiency is provided in Fig.\ref{fig3}:
The only task is $(4,m)$ with input data only occur at node $1$,
the routing/offloading strategy and marginal costs are shown on figure. 
It is easy to verify that condition in Lemma \ref{Lemma_Necessary} is satisfied.
However, by increasing $\phi^-_{24}$ and decreasing $\phi^-_{23}$, the input marginal ${\partial D}/{\partial r_2}$ will decrease and thus ${\partial D}/{\partial \phi^-_{12}}$ will decrease. In this case, the objective $D$ could be improved by increasing $\phi^-_{12}$ and decreasing $\phi^-_{14}$.

\begin{figure}[htbp]
\centerline{\includegraphics[width=0.4\textwidth]{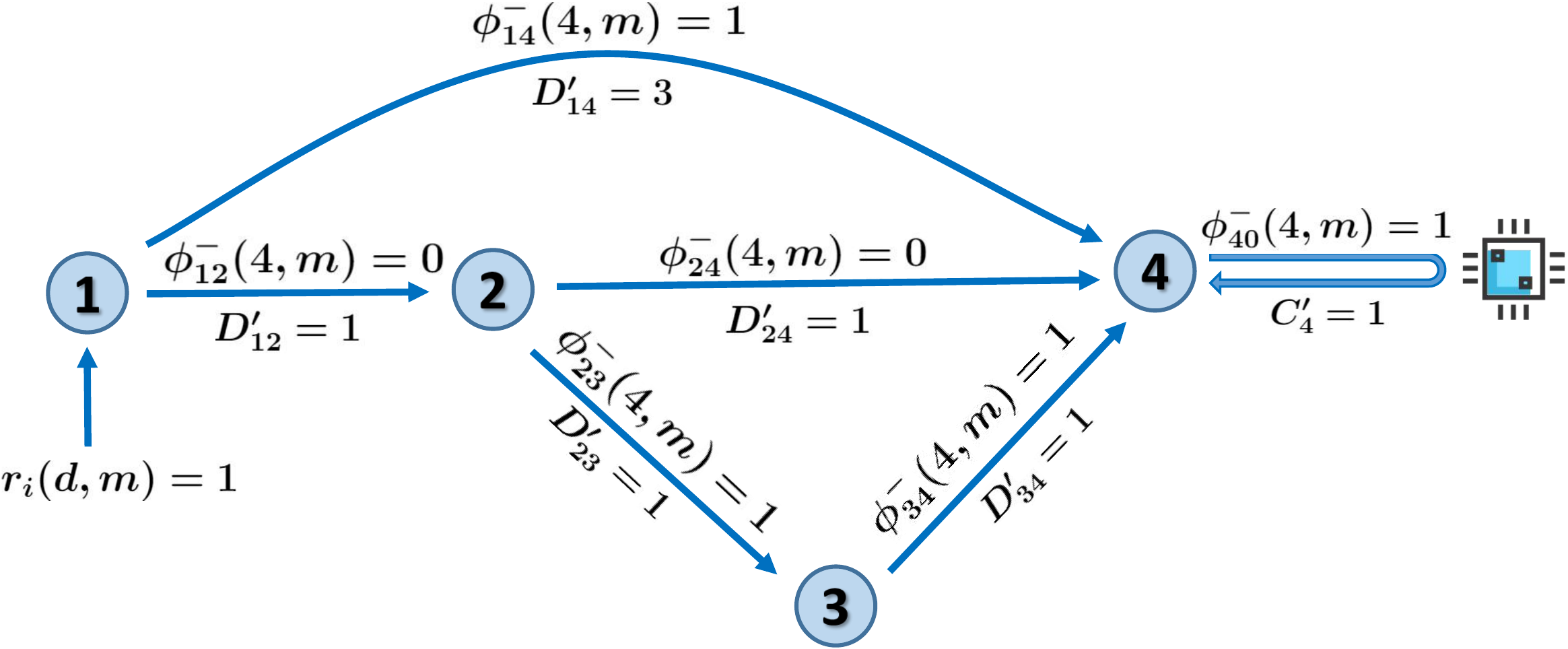}}
\vspace{-0.2\baselineskip}
\caption{A non-optimal situation satisfying the condition in Lemma \ref{Lemma_Necessary}}
\label{fig3}
\vspace{-0.5\baselineskip}
\end{figure}

The underlying intuition for such non-sufficiency is that the condition in Lemma \ref{Lemma_Necessary} automatically holds if $t_i^{-}(d,m) = 0$ and $t_i^{+}(d,m) = 0$, no matter what the routing/computation strategy is. 
Nevertheless, given $t_i^-(d,m)$ and $t_i^+(d,m)$ exist identically in \eqref{partial_D_phi-} and \eqref{partial_D_phi+} respectively for all $j$, we remove them and devise an augmented condition specified in Theorem \ref{Thm_Sufficient}, which instead is shown to be sufficient for global optimality.

\begin{theorem} \label{Thm_Sufficient}
Let $b = 0$ and $\boldsymbol{\phi}^-,\boldsymbol{\phi}^+$ be feasible to (\ref{JointProblem_nodebased}), if the following holds: for all $i$ and $(d,m)$, and for all $j\in\left\{0\right\} \cup \mathcal{O}(i)$ w.r.t data flow or $j\in\mathcal{O}(i)$ w.r.t. result flow,
\begin{align*}
    &\delta_{ij}^-(d,m) \begin{cases}
    = \min\limits_{k \in \left\{0\right\}  \cup \mathcal{O}(i)} \delta_{ik}^-(d,m), \quad \text{if } \phi_{ij}^-(d,m) >0
    \\ \geq \min\limits_{k \in \left\{0\right\}  \cup \mathcal{O}(i)} \delta_{ik}^-(d,m), \quad \text{if } \phi_{ij}^-(d,m) =0
    \end{cases}
    \\& \delta_{ij}^+(d,m) \begin{cases}
    = \min\limits_{k \in \mathcal{O}(i)} \delta_{ik}^+(d,m), \quad \text{if } \phi_{ij}^+(d,m) >0
    \\ \geq \min\limits_{k \in \mathcal{O}(i)} \delta_{ik}^+(d,m),  \quad \text{if } \phi_{ij}^+(d,m) =0
    \end{cases}
\end{align*}
where the augmented marginals $\delta_{ij}^-(d,m)$ and $\delta_{ij}^+(d,m)$ are
\begin{equation}
\begin{aligned}
    \delta_{ij}^-(d,m) &= 
    \begin{cases}
    D_{ij}^\prime(F_{ij}) + \frac{\partial D}{\partial r_j(d,m)}, &\quad \text{if } j \in \mathcal{O}(i)
    \\ \frac{\partial C_i(\boldsymbol{G_i})}{\partial g_i^m} + a_m  \frac{\partial D}{\partial t_i^+(d,m)} , &\quad \text{if } j =0
    \end{cases} 
    \\ \delta_{ij}^+(d,m) &=  D_{ij}^\prime(F_{ij}) +  \frac{\partial D}{\partial t_j^+(d,m)},
\end{aligned}\label{delta}
\end{equation}
then $(\boldsymbol{\phi}^-,\boldsymbol{\phi}^+)$ is a global optimal solution to (\ref{JointProblem_nodebased}).
\end{theorem}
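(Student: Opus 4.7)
My plan is to show $D(\tilde{\boldsymbol{\phi}}) \geq D(\boldsymbol{\phi})$ for every feasible $\tilde{\boldsymbol{\phi}}$. By convexity of each $D_{ij}(\cdot)$ and $C_i(\cdot)$,
\begin{equation*}
D(\tilde{\boldsymbol{\phi}}) - D(\boldsymbol{\phi}) \geq \sum_{(i,j)} D_{ij}'(F_{ij})(\tilde{F}_{ij} - F_{ij}) + \sum_i \nabla C_i(\boldsymbol{G}_i)\cdot (\tilde{\boldsymbol{G}}_i - \boldsymbol{G}_i),
\end{equation*}
so it suffices to prove that $A(\tilde{\boldsymbol{\phi}}) \geq A(\boldsymbol{\phi})$, where $A(\boldsymbol{\phi}') := \sum_{(i,j)} D_{ij}'(F_{ij})\,F_{ij}' + \sum_i \nabla C_i(\boldsymbol{G}_i)\cdot \boldsymbol{G}_i'$ is the linearization at $\boldsymbol{\phi}$ evaluated at whichever strategy sits in its argument (primed flows come from $\boldsymbol{\phi}'$, but the derivative coefficients are always frozen at the hypothesized-optimal $\boldsymbol{\phi}$).

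\textbf{Closed form for $A(\boldsymbol{\phi})$.} Abbreviate $V_i^-(d,m) := \partial D/\partial r_i(d,m)$ and $V_i^+(d,m) := \partial D/\partial t_i^+(d,m)$, so that \eqref{partial_D_r}--\eqref{partial_D_t} read $V_i^-(d,m) = \sum_{j\in\{0\}\cup\mathcal{O}(i)}\phi_{ij}^-(d,m)\,\delta_{ij}^-(d,m)$ and $V_i^+(d,m) = \sum_{j\in\mathcal{O}(i)}\phi_{ij}^+(d,m)\,\delta_{ij}^+(d,m)$. Multiplying the data identity by $t_i^-(d,m)$ and summing over $i$, the downstream $V_j^-$ contributions telescope against the flow-balance identity $t_j^-(d,m) - r_j(d,m) = \sum_i f_{ij}^-(d,m)$, leaving a link-cost sum, a computation-cost sum, and a residual $a_m\sum_i V_i^+(d,m)\,g_i(d,m)$. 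The parallel manipulation of the result identity, terminated by the sink condition $V_d^+(d,m)=0$, shows that this residual equals exactly $\sum_{(i,j)} D_{ij}'(F_{ij})\,f_{ij}^+(d,m)$. Summing the two per-task identities over $(d,m)$ collapses everything into the clean equality $A(\boldsymbol{\phi}) = \sum_{i,(d,m)} V_i^-(d,m)\,r_i(d,m)$.

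\textbf{Lower bound via the augmented condition.} Because $V_i^-(d,m)$ is the $\phi$-weighted average of the $\delta_{ij}^-(d,m)$'s, the hypothesis $\delta_{ij}^-(d,m)\geq \min_k \delta_{ik}^-(d,m)$ with equality on $\{j:\phi_{ij}^->0\}$ forces $\delta_{ij}^-(d,m)\geq V_i^-(d,m)$ for \emph{every} $j$, and analogously $\delta_{ij}^+(d,m)\geq V_i^+(d,m)$ --- even at nodes where $t_i^\pm(d,m)=0$, which is precisely the gap that Lemma~\ref{Lemma_Necessary} leaves open and that the counterexample in Fig.~\ref{fig3} exploits. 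Now repeat the telescoping of the previous paragraph with the alternative flows $\tilde f^\pm, \tilde g, \tilde t^\pm$ replacing the $\boldsymbol{\phi}$-induced ones, while keeping $D_{ij}', \nabla C_i, V_j^\pm, \delta_{ij}^\pm$ frozen at $\boldsymbol{\phi}$. The one step that was an equality before, namely $\sum_j\phi_{ij}^-\delta_{ij}^-=V_i^-$, becomes $\sum_j\tilde\phi_{ij}^-\delta_{ij}^-\geq V_i^-$ by the inequality above; the remaining bookkeeping transfers verbatim, the result-side telescoping again absorbs the $a_m V_i^+(d,m)\,\tilde g_i(d,m)$ coupling, and we obtain $A(\tilde{\boldsymbol{\phi}})\geq \sum_{i,(d,m)}V_i^-(d,m)\,r_i(d,m)=A(\boldsymbol{\phi})$, which together with the convexity bound proves the theorem.

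\textbf{Main obstacle.} The delicate point is the synchronized bookkeeping of the two flow types: the data-side telescoping leaves an uncancelable $a_m V_i^+(d,m)\,g_i(d,m)$ coupling that must be absorbed exactly by the result-side telescoping, which is why the augmented condition must be imposed simultaneously on both $\boldsymbol{\phi}^-$ and $\boldsymbol{\phi}^+$ and why neither inequality alone is sufficient. A secondary technicality is well-posedness of the $V^\pm$-recursion when $\boldsymbol{\phi}$ carries flow around directed cycles; feasibility together with the sink boundary $V_d^+=0$ and the assumption that exogenous data is eventually computed should ensure a unique solution, but this must be verified so that the $V_i^\pm(d,m)$ qualify as genuine partial derivatives rather than formal artefacts of the fixed-point system.
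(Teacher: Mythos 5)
Your proposal is correct and follows essentially the same route as the paper's own proof: exploiting convexity of $D$ in the flow variables $(F_{ij},\boldsymbol{G}_i)$ to reduce the claim to non-negativity of the linearization at $\boldsymbol{\phi}$, deriving $\delta_{ij}^\pm(d,m)\geq \partial D/\partial r_i(d,m)$ (resp.\ $\partial D/\partial t_i^+(d,m)$) from the min-condition, and telescoping via flow conservation so that the data-side residual $a_m\,(\partial D/\partial t_i^+(d,m))\,g_i(d,m)$ is absorbed by the result-side sum, yielding equality for $\boldsymbol{\phi}$ and the matching inequality for any competitor. Your closing remarks on the data/result coupling and on well-posedness of the marginal-cost recursion under loops are accurate observations (the paper handles the latter via the loop-free property enforced by the algorithm) but do not change the argument.
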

\begin{proof}
{See Appendix}.
\end{proof}

Theorem \ref{Thm_Sufficient} is the main theoretical result of this paper, and in fact a practical criterion for algorithm implementation. As a simple illustration of the difference between Theorem \ref{Thm_Sufficient} and Lemma \ref{Lemma_Necessary}, we further assume the network in Fig. \ref{fig3} has linear communication costs. 
It turns out that for any routing scheme satisfying Theorem \ref{Thm_Sufficient}, we must have $\phi^-_{12}(4,m) = 1$ and $\phi^-_{24}(4,m) = 1$, which precisely indicates the shortest path $1 \to 2 \to 4$ for data flow.

\section{Distributed and Adaptive Algorithm}
\label{Section: Algorithm}

In this section, we introduce a distributed and adaptive algorithm that converges to the global optimal solution of \eqref{JointProblem_nodebased} specified by Theorem \ref{Thm_Sufficient}, based on scaled gradient projection. 
We allow nodes update their routing-computation strategies in an autonomous and asynchronous manner, and adapt to the changes of input rate and network topology.
Our method follows Xi and Yeh\cite{xi2008node}, and further distinguishes data and result flows by improving control messages exchanging protocol.

\subsection{Algorithm overview and loop-free property}
We first introduce the \emph{loop-free} property of a global strategy $\boldsymbol{\phi}$. 
For a task $(d,m)$, there is a \emph{data path} from node $i$ to node $j$ ($j \neq i$) if there is a sequence of node $n_1, \cdots ,n_L$ such that $(n_l,n_{l+1}) \in E$ and $\phi^-_{n_l n_{l+1}}(d,m) > 0$ for $l = 1,\cdots,L-1$, with $n_1 = i$ and $n_L = j$, where $L$ is the \emph{hop number} for this data path. We say $\boldsymbol{\phi}$ has a \emph{data loop} if there exists task $(d,m)$ and node $i$, $j$ such that $i$ has a data path to $j$, and 
vice versa.
Similarly, we can define \emph{result path} and \emph{result loop} for result flows.
Then, we say strategy $\boldsymbol{\phi}$ is \emph{loop-free} if it has neither data loop nor result loop.
{
Loop-free is a fundamental requirement for $\boldsymbol{\phi}$ to be feasible and is guaranteed throughout the algorithm. Because given a slight increase in input rate, a loop may cause the on-loop flow to build up to infinity.} \footnote{We allow loops concatenated by a data path and a result path of the same task, which occurs in scenarios where the destination is the data source.}  

We denote respectively by $\boldsymbol{\phi}^-_{i}(d,m)$ and $\boldsymbol{\phi}^+_{i}(d,m)$ node $i$'s strategy vector $\left[\phi^-_{ij}(d,m)\right]_{j \in \left\{0\right\} \cup \mathcal{O}(i)}$ and $\left[\phi^+_{ij}(d,m)\right]_{j \in \mathcal{O}(i)}$, 
and by $\boldsymbol{\delta}^-_{i}(d,m)$ and $\boldsymbol{\delta}^+_{i}(d,m)$ the vectors of augmented marginals (defined in \eqref{delta}), $\left[\delta^-_{ij}(d,m)\right]_{j \in \left\{0\right\} \cup \mathcal{O}(i)}$ and $\left[\delta^+_{ij}(d,m)\right]_{j \in \mathcal{O}(i)}$.
We assume the network starts with a feasible and loop-free state ${\boldsymbol{\phi}}^0$. At $t$-th iteration, each node $i$ updates its strategy corresponding to task $(d,m)$ with the following scaled gradient projection variant
\begin{equation}
\begin{aligned}
    &\boldsymbol{\phi}^-_i(d,m)^{t+1} = \underset{\boldsymbol{v} \in \mathcal{D}_i^-(d,m)^t}{\arg\min} \boldsymbol{\delta}^-_{i}(d,m)^t \cdot (\boldsymbol{v} - \boldsymbol{\phi}^-_i(d,m)^{t}) 
    \\&\quad+ (\boldsymbol{v} - \boldsymbol{\phi}^-_i(d,m)^{t})^T M^-_i(d,m)^t (\boldsymbol{v} - \boldsymbol{\phi}^-_i(d,m)^{t}),
\end{aligned} \label{variable_update}    
\end{equation}
and $\boldsymbol{\phi}^+_i(d,m)^{t+1}$ is updated similarly with ``$-$'' replaced by ``$+$'', where $M^-_i(d,m)^t$, $M^+_i(d,m)^t$ are symmetric and positive semi-definite scaling matrices 
designed to achieve good convergence properties.
$\mathcal{D}_i^-(d,m)^t$ is the feasible set of $\boldsymbol{\phi}^-_i(d,m)^{t+1}$ 
given by $\boldsymbol{\phi}_i \geq \boldsymbol{0}$ and 
\begin{align*}
\sum\nolimits_{ j \in \{0\} \cup \mathcal{O}(i)}  \phi_{ij} = 1; \quad \phi_{ij} = 0 \text{ for } \forall j \in \mathcal{B}_i^-(d,m)^t ,
\end{align*}
and feasible set $\mathcal{D}_i^+(d,m)^t$ is defined similarly, where $\mathcal{B}_i^-(d,m)^t$ and $\mathcal{B}_i^+(d,m)^t \subseteq V$ are the \emph{blocked nodes} of $i$ relevant to data and result of task $(d,m)$ to guarantee the feasibility and loop-free property. 
We next describe in detail the estimation of marginals, and how to obtain matrices $M^-_i(d,m)^t$, $M^+_i(d,m)^t$ and sets $\mathcal{B}_i^-(d,m)^t$, $\mathcal{B}_i^+(d,m)^t$. 
{We emphasis that our proposed algorithm is not pure gradient-based, as the gradients are replaced by the augmented marginals $\boldsymbol{\delta}^-_{i}(d,m)$ and $\boldsymbol{\delta}^+_{i}(d,m)$, corresponding to Theorem \ref{Thm_Sufficient}.}

\subsection{Marginal cost estimation by broadcast}
\label{subsection: broadcast}
Each node $i$ needs to compute its augmented marginal cost vectors $\boldsymbol{\delta}^-_{i}(d,m)$ and $\boldsymbol{\delta}^+_{i}(d,m)$ following \eqref{delta}. 
Node $i$ can directly estimate $D^\prime_{ij}(F_{ij})$ and $C^\prime_{i}(\boldsymbol{G_i})$ while sending and receiving any message on link $(i,j)$ 
or performing computation at local computation unit. 
To obtain $\frac{\partial D}{\partial r_i(d,m)}$ and $\frac{\partial D}{\partial t^+_i(d,m)}$, we introduce a two-stage broadcast protocol:
 
  \textbf{1)}  To calculate $\frac{\partial D}{\partial t^+_i(d,m)}$, node $i$ first waits until receives messages carrying $\frac{\partial D}{\partial t^+_j(d,m)}$ from 
    all downstream nodes 
    $j\in\mathcal{O}(i)$
    such that $\phi^+_{ij}(d,m) > 0$
    , and estimated $D^\prime_{ij}(F_{ij})$. Then calculates its own $\frac{\partial D}{\partial t^+_i(d,m)}$ according to (\ref{partial_D_t}) and broadcasts this to 
    all upstream nodes 
    $k\in\mathcal{I}(i)$ such that $\phi^+_{ki}(d,m) > 0$. 

\textbf{2)}  For the $\frac{\partial D}{\partial r_i(d,m)}$, a similar procedure is used to compute and broadcast $\frac{\partial D}{\partial r_i(d,m)}$ according to (\ref{partial_D_r}). Note that node $i$ must obtain $\frac{\partial D}{\partial t^+_i(d,m)}$ before calculating and broadcasting $\frac{\partial D}{\partial r_i(d,m)}$. 

With the loop-free property held, such broadcast starting at destination $d$ (with $\frac{\partial D}{\partial t^+_d(d,m)} = 0$) is guaranteed to traverse throughout the network. 
Though data flows have no explicit (and fixed) sink, the loop-free property guarantees the broadcast of stage 2) could successfully start with the last node of each data path.

\subsection{Blocked nodes and scaling matrices}
\label{subsection: Blocked nodes and scale matrices}
To achieve the feasibility and loop-free property, we consider sets $\mathcal{B}_i^-(d,m)^t, \mathcal{B}_i^+(d,m)^t \subseteq V$, to nodes in which node $i$ is forbidden to forward data or result of task $(d,m)$, respectively. 
By Theorem \ref{Thm_Sufficient} combined with expression \eqref{partial_D_r} and \eqref{partial_D_t}, at a global optimal strategy, the input marginals $\frac{\partial D}{\partial r_i(d,m)}$ or $\frac{\partial D}{\partial t^+_i(d,m)}$ should be monotonically decreasing along any data-path or result-path. We thus mandate that node $i$ should not increase flow rate to a neighbor $j$ that either (1) has higher input marginal, or (2) could form a data/result-path containing some link $(p,q)$ and $q$ has higher input marginal than $p$.  We denote by $\mathcal{B}_i^-(d,m)^t, \mathcal{B}_i^+(d,m)^t$ the sets of such neighbor $j$. 
Note that  $\mathcal{B}_i^-(d,m)^t$ and $\mathcal{B}_i^+(d,m)^t$ are defined separately according to $\frac{\partial D}{\partial r_i(d,m)}$ and $\frac{\partial D}{\partial t^+_i(d,m)}$.
Then, the loop-free property is maintained throughout the algorithm if sets of block nodes are practiced in each iteration. 
The feasibility is also guaranteed since when current state approaches a link/processor capacity, the corresponding marginal will grow to infinity, preventing any flow increase. 
The readers are referred to \cite{gallager1977minimum} for detail. 

The scaling matrices $M^-_i(d,m)^t$ and $M^+_i(d,m)^t$ are introduced to improve the convergence speed while guaranteeing convergence
from arbitrary initial points \cite{xi2008node}. Specifically,
\begin{align*}
    &M_i^+(d,m)^t = \frac{t_i^+(d,m)^t}{2} \times \text{diag}\{A_{ij}(D^0) + 
    \\ &\left| \mathcal{O}(i) \backslash \mathcal{B}_i^+(d,m)^t\right| h_j^+(d,m)^t A(D^0) \}_{j \in \mathcal{O}(i) \backslash \mathcal{B}_i^+(d,m)^t},
\end{align*}
where $D^0$ is the overall cost at initial state, $h_j^+(d,m)^t$ is the maximum hop number among all existing result paths from $j$ to destination $d$, operator $diag$ 
forms a diagonal matrix, and
\begin{align*}
    A_{ij}(D^0) &= \sup_{D<D^0}D^{\prime \prime}_{ij}(F_{ij}), \quad 
     A(D^0) &= \max_{(i,j) \in E} A_{ij}(D^0).
\end{align*}
The definition of $M_i^-(d,m)^t$ is almost a repetition as above, but in terms of the data flow.

\subsection{Asynchronous convergence and complexity}


Our algorithm allows nodes to update their variables asynchronously, or with a non-perfect synchronization due to practical constraints such as the broadcast delay in a large-scale network. To formulate this asynchrony, we assume that at $t$-th iteration, only one node $i$ updates either its $\boldsymbol{\phi}_i^-(d,m)$ or $\boldsymbol{\phi}_i^+(d,m)$ for one task $(d,m)\in S$, and let 
\begin{align*}
    T_{\boldsymbol{\phi}^-_i(d,m)} &= \left\{t \big| \text{ node } i \text{ update its } \boldsymbol{\phi}_i^-(d,m) \text{ at iteration } t \right\},
\end{align*}
and similarly as $T_{\boldsymbol{\phi}^+_i(d,m)}$, then Theorem \ref{thm:convergence} holds.

\begin{theorem}
\label{thm:convergence}
Assume $b = 0$ and the network is static, if 
\begin{align*}
    \lim_{t \to \infty} \left|T_{\boldsymbol{\phi}^-_i(d,m)}\right| = \infty, \quad \lim_{t \to \infty} \left|T_{\boldsymbol{\phi}^+_i(d,m)}\right| = \infty,
\end{align*}
then the constructed sequence $\left\{{\boldsymbol{\phi}}^t\right\}_{t\to\infty}$ converges to a point ${\boldsymbol{\phi}}^*$, where ${\boldsymbol{\phi}}^*$ is feasible and loop-free, and the condition in Theorem \ref{Thm_Sufficient} holds (proof see \cite{xi2008node}).
\end{theorem}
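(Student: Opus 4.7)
The plan is to follow the scaled gradient projection framework of Xi and Yeh, suitably extended so that both data and result routing variables are treated simultaneously. First I would establish a descent lemma: for any single-variable update of the form \eqref{variable_update}, the scaling matrices $M_i^{\pm}(d,m)^t$ are chosen precisely so that the second-order Taylor expansion of $D$ around $\boldsymbol{\phi}^t$ along the update direction is upper-bounded by the quadratic term that appears in the projection objective. The constants $A_{ij}(D^0)$ and $A(D^0)$ are global upper bounds (on the sublevel set $\{D\le D^0\}$) on the second derivatives of the link cost functions, and the extra factor $|\mathcal{O}(i)\setminus\mathcal{B}_i^+(d,m)^t|\,h_j^+(d,m)^t$ absorbs the curvature contributed by downstream links along any result path of length at most $h_j^+(d,m)^t$ hops (analogously for data paths). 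Thus any iteration weakly decreases $D$, strictly so unless the projection returns the current point.

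Second, I would argue the loop-free and feasibility invariants. Starting from a loop-free $\boldsymbol{\phi}^0$ and enforcing $\mathcal{B}_i^-(d,m)^t,\mathcal{B}_i^+(d,m)^t$ as described in Section~\ref{subsection: Blocked nodes and scale matrices}, any newly activated branch $(i,j)$ has a strictly smaller input marginal at $j$ than at $i$, so the data-marginal (respectively result-marginal) is strictly decreasing along every data path (result path); this precludes creation of loops. Feasibility with respect to link/processor capacities is preserved because $D_{ij}^\prime$ and $\partial C_i/\partial g_i^m$ blow up near capacity, forcing the corresponding augmented marginals, and thus the projection, to reject any move that would cross the boundary. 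Combined with the descent property, the iterates remain in the compact sublevel set $\{D\le D^0\}$, and the scalar sequence $\{D(\boldsymbol{\phi}^t)\}$ is monotone nonincreasing and bounded below, hence convergent.

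Third, I would turn the asynchrony hypothesis into a full-sequence convergence statement. Compactness of the sublevel set yields a convergent subsequence $\boldsymbol{\phi}^{t_k}\to\boldsymbol{\phi}^*$. Because every component $\boldsymbol{\phi}_i^\pm(d,m)$ is updated infinitely often and the per-step decrease of $D$ is bounded below by a positive-definite quadratic in the step size (from the descent lemma), the increments $\|\boldsymbol{\phi}^{t+1}-\boldsymbol{\phi}^t\|$ must vanish along the whole sequence; standard arguments for Gauss–Seidel style asynchronous projection methods then upgrade subsequential convergence to full-sequence convergence, and moreover show that at $\boldsymbol{\phi}^*$ each per-component projection is a fixed point. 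Writing out the fixed-point condition of \eqref{variable_update} and cancelling the positive-definite scaling matrix gives exactly the equalize-and-exceed pattern on the augmented marginals $\boldsymbol{\delta}_i^\pm(d,m)$ that is the hypothesis of Theorem~\ref{Thm_Sufficient}, so $\boldsymbol{\phi}^*$ is a global optimum.

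The main obstacle, and the place where the extension beyond \cite{xi2008node} really bites, is the descent lemma in the presence of coupled data and result flows: a single change in $\boldsymbol{\phi}^-_i(d,m)$ perturbs not only downstream data-link loads and the local computation load but, through $\gamma_m(\cdot)=a_m\,g_i(d,m)$, also the entire result sub-tree rooted at $i$; the factor $a_m$ and the hop count $h_j^+(d,m)^t$ in the scaling matrix are exactly what is needed to dominate this cross-coupling, but verifying that the resulting quadratic upper bound is tight enough to remain a valid majorant while still allowing sufficient decrease requires a careful bookkeeping of the recursive marginals \eqref{partial_D_r}–\eqref{partial_D_t}. Once this majorization–minimization inequality is in hand, the rest of the argument is a direct adaptation of the asynchronous convergence proof of \cite{xi2008node}.
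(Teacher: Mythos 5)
Your proposal follows essentially the same route as the paper, which itself gives no in-line proof but defers entirely to the scaled-gradient-projection convergence argument of Xi and Yeh \cite{xi2008node}: a majorization/descent lemma justified by the choice of scaling matrices, loop-freedom and feasibility preserved via the blocked-node sets and blow-up of marginals near capacity, monotone decrease on a compact sublevel set, and the per-component projection fixed point recovering the condition of Theorem~\ref{Thm_Sufficient}. Your outline is, if anything, more explicit than the paper's own treatment, and you correctly identify the one place where the extension is nontrivial (the descent inequality must dominate the curvature of the $a_m$-scaled result sub-tree induced by a data-routing update), though you state rather than verify that majorization.
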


We assume that the variables of all nodes are updated one round every time slot of duration $T$, and every broadcast message described in Section \ref{subsection: broadcast} is sent once in a slot. There are $2|E|$ transmissions of control messages corresponding to a task in one slot, and thus totally $2|S||E|$ transmissions, with on average $2|S|/T$ per link/second and at most $2\Bar{d}|S|$ for each node, where $\Bar{d}$ is the largest out-degree, and a singe broadcast message has $O(1)$ size. 
Moreover, let $\Bar{h}$ be the maximum path hop, and $t_c$ be the maximum time for control message transmission, the broadcast procedure yields a delay of at most $2\Bar{h}t_c$.
The variable size for individual node optimization problem is at most $2\Bar{d}|S|$, where each problem, although is a scaled gradient projection, could be efficiently solved by various commercial solvers since the scaling matrix is PSD and diagonal, and the constraint set is simplex. 

\section{Numerical Evaluation} \label{Section:Simulation}

In this section, we evaluate the scaled gradient projection algorithm, i.e., \textbf{SGP} proposed in Section \ref{Section: Algorithm} by simulation. 
We implement several baseline algorithms and compare the performance of those 
against \texttt{SGP} over different networks and parameter settings. Note that we set $b = 0$ in all experiments.

\begin{table}[htbp]
\footnotesize
\begin{tabular}{|c|p{0.01\textwidth}|p{0.01\textwidth}|p{0.01\textwidth}|p{0.01\textwidth}|c|p{0.01\textwidth}|c|p{0.015\textwidth}|}
\hline
\textbf{Network}&\multicolumn{8}{|c|}{\textbf{Parameters}} \\

\textbf{Topology} & \multicolumn{1}{p{0.01\textwidth}}{$|V|$} & \multicolumn{1}{p{0.01\textwidth}}{$|E|$} & \multicolumn{1}{p{0.01\textwidth}}{$|S|$} & \multicolumn{1}{p{0.01\textwidth}}{$|\mathcal{R}|$} & \multicolumn{1}{c}{\textbf{Link}} & \multicolumn{1}{p{0.015\textwidth}}{$\Bar{d}_{ij}$} &\multicolumn{1}{c}{\textbf{Comp}}& $\Bar{s}_i$ \\
\hline
Connected-ER& \multicolumn{1}{p{0.01\textwidth}}{$20$} & \multicolumn{1}{p{0.01\textwidth}}{$40$} & \multicolumn{1}{p{0.01\textwidth}}{$15$} & \multicolumn{1}{p{0.01\textwidth}}{$5$} & \multicolumn{1}{c}{Queue} & \multicolumn{1}{p{0.015\textwidth}}{$10$} &\multicolumn{1}{c}{Sum-Queue}& $12$ \\
Balanced-tree & \multicolumn{1}{p{0.01\textwidth}}{$15$} & \multicolumn{1}{p{0.01\textwidth}}{$14$} & \multicolumn{1}{p{0.01\textwidth}}{$20$} & \multicolumn{1}{p{0.01\textwidth}}{$5$} & \multicolumn{1}{c}{Queue} & \multicolumn{1}{p{0.015\textwidth}}{$20$} &\multicolumn{1}{c}{Sum-Queue}& $15$ \\
Fog & \multicolumn{1}{p{0.01\textwidth}}{$19$} & \multicolumn{1}{p{0.01\textwidth}}{$30$} & \multicolumn{1}{p{0.01\textwidth}}{$30$} & \multicolumn{1}{p{0.01\textwidth}}{$5$} & \multicolumn{1}{c}{Queue} & \multicolumn{1}{p{0.015\textwidth}}{$20$} &\multicolumn{1}{c}{Sum-Queue}& $17$ \\
Abilene & \multicolumn{1}{p{0.01\textwidth}}{$11$} & \multicolumn{1}{p{0.01\textwidth}}{$14$} & \multicolumn{1}{p{0.01\textwidth}}{$10$} & \multicolumn{1}{p{0.01\textwidth}}{$3$} & \multicolumn{1}{c}{Queue} & \multicolumn{1}{p{0.015\textwidth}}{$15$} &\multicolumn{1}{c}{Sum-Queue}& $10$ \\
LHC & \multicolumn{1}{p{0.01\textwidth}}{$16$} & \multicolumn{1}{p{0.01\textwidth}}{$31$} & \multicolumn{1}{p{0.01\textwidth}}{$30$} & \multicolumn{1}{p{0.01\textwidth}}{$5$} & \multicolumn{1}{c}{Queue} & \multicolumn{1}{p{0.015\textwidth}}{$15$} &\multicolumn{1}{c}{Sum-Queue}& $15$ \\
GEANT & \multicolumn{1}{p{0.01\textwidth}}{$22$} & \multicolumn{1}{p{0.01\textwidth}}{$33$} & \multicolumn{1}{p{0.01\textwidth}}{$40$} & \multicolumn{1}{p{0.01\textwidth}}{$7$} & \multicolumn{1}{c}{Queue} & \multicolumn{1}{p{0.015\textwidth}}{$25$} &\multicolumn{1}{c}{Sum-Queue}& $20$ \\
SW & \multicolumn{1}{p{0.01\textwidth}}{$100$} & \multicolumn{1}{p{0.01\textwidth}}{$320$} & \multicolumn{1}{p{0.01\textwidth}}{$120$} & \multicolumn{1}{p{0.01\textwidth}}{$10$} & \multicolumn{1}{c}{(both)} & \multicolumn{1}{p{0.015\textwidth}}{$20$} &\multicolumn{1}{c}{(both)}& $20$ \\

\hline
\textbf{Other}&\multicolumn{8}{|c|}{
$d_{\max} = 30$, $d_{\min} = 10$, $s_{\max} = 30$, $s_{\min} = 2$
} \\

\textbf{Parameters}&\multicolumn{8}{|c|}{
$M = 5$, $r_{\min} = 0.5$, $r_{\max} = 1.5$
}\\
\hline
\end{tabular}
\vspace{-0.2\baselineskip}
\caption{Simulated Network Scenarios}
\label{tab_scenario}
\vspace{-0.5\baselineskip}
\end{table}

We summarize the simulation scenarios in Table \ref{tab_scenario}. 
We evaluate the algorithms in
the following different network topologies: 
\textbf{Connected-ER} is a connectivity-guaranteed Erdős–Rényi graph, generated by uniformly-randomly creating links with probability $p = 0.1$ on a linear network concatenating all nodes.
\textbf{Balanced-tree} is a complete binary tree.
\textbf{Fog} is a sample topology for fog-computing, where nodes on the same layer are linearly linked in a balance tree \cite{kamran2019deco}.
\textbf{Abilene} is the topology of the predecessor of \emph{Internet2 Network} \cite{rossi2011caching}.
\textbf{GEANT} is a pan-European data network for the research and education community \cite{rossi2011caching}.
\textbf{SW} (small-world) is a ring-like graph with additional short-range and long-range edges \cite{kleinberg2000small}.


\begin{figure*}[htbp]
\centerline{\includegraphics[width=1\textwidth]{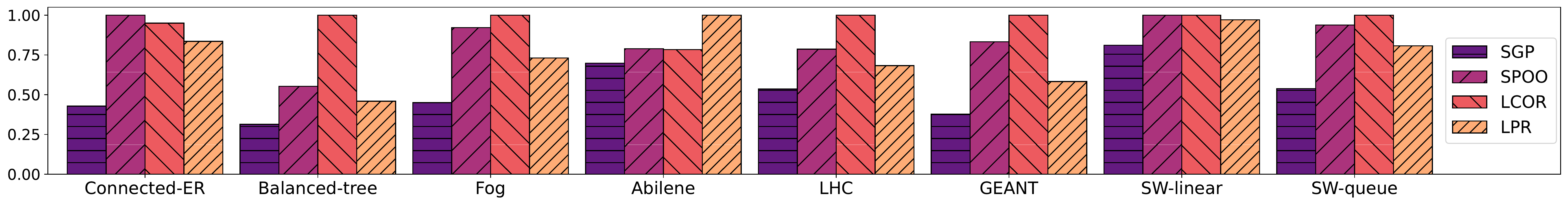}}
\vspace{-0.5\baselineskip}
\caption{Normalized total cost for network scenarios in Table \ref{tab_scenario}}
\label{fig_bar}
\vspace{-0.8\baselineskip}
\end{figure*}

\begin{figure*}
     \centering
     \begin{subfigure}[b]{0.215\textwidth}
         \centering
         \includegraphics[width=\textwidth]{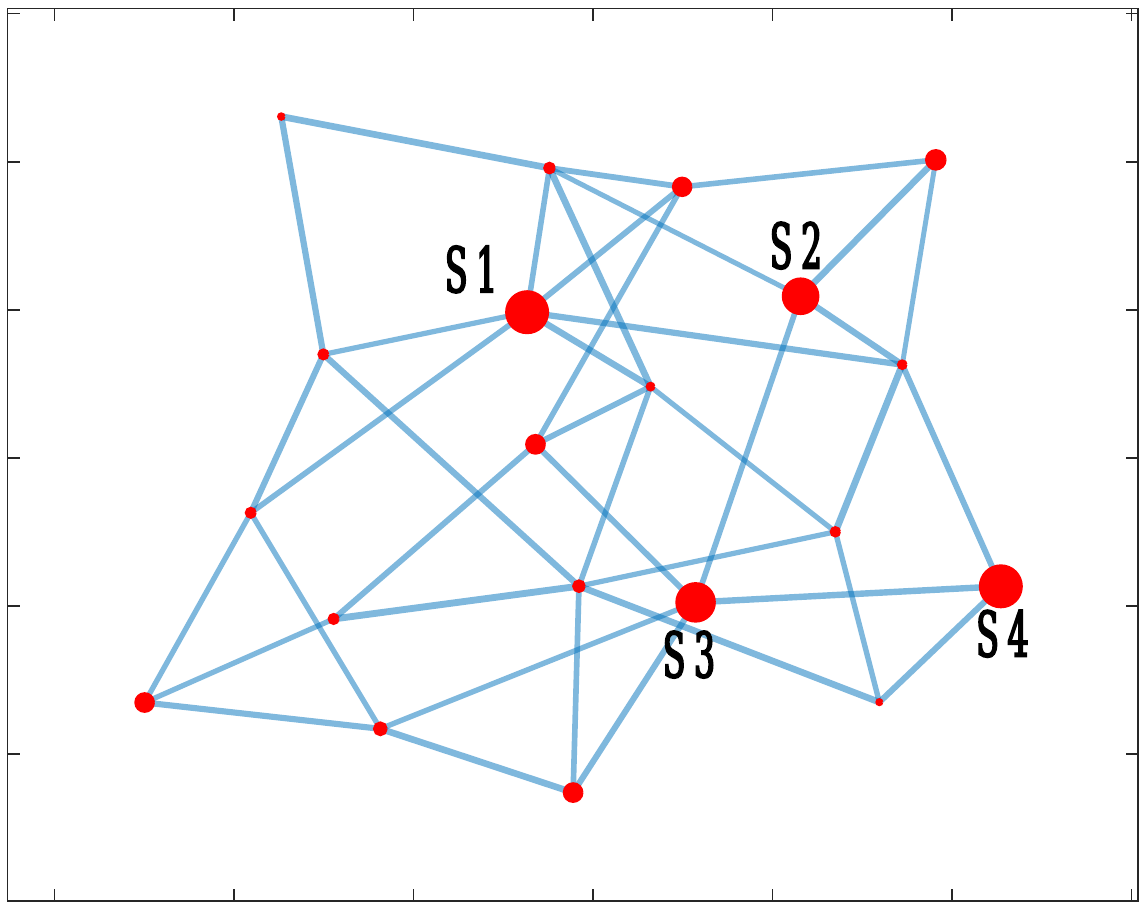}
         \caption{Topology \emph{Connected-ER}, link width equal to link capacity $d_{ij}$ and node size equal to computation capacity $s_i$}
         \vspace{-0.2\baselineskip}
         \label{fig:topology}
     \end{subfigure}
     \hfill
     \begin{subfigure}[b]{0.24\textwidth}
         \centering
         \includegraphics[width=\textwidth]{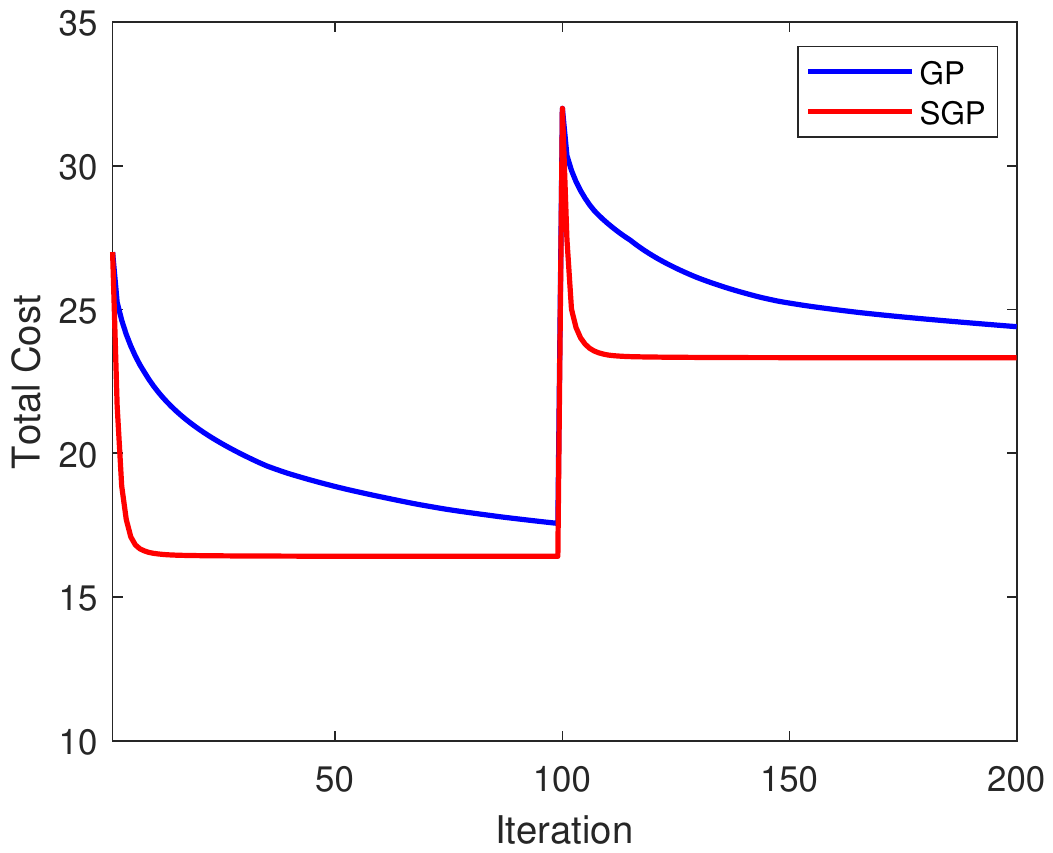}
         \vspace{-1.5\baselineskip}
         \caption{Convergence trajectory of \texttt{GP} and \texttt{SGP} subject to server failure at S1}
         \label{fig:convergence}
     \end{subfigure}
      \hfill
     \begin{subfigure}[b]{0.25\textwidth}
         \centering
         \includegraphics[width=\textwidth]{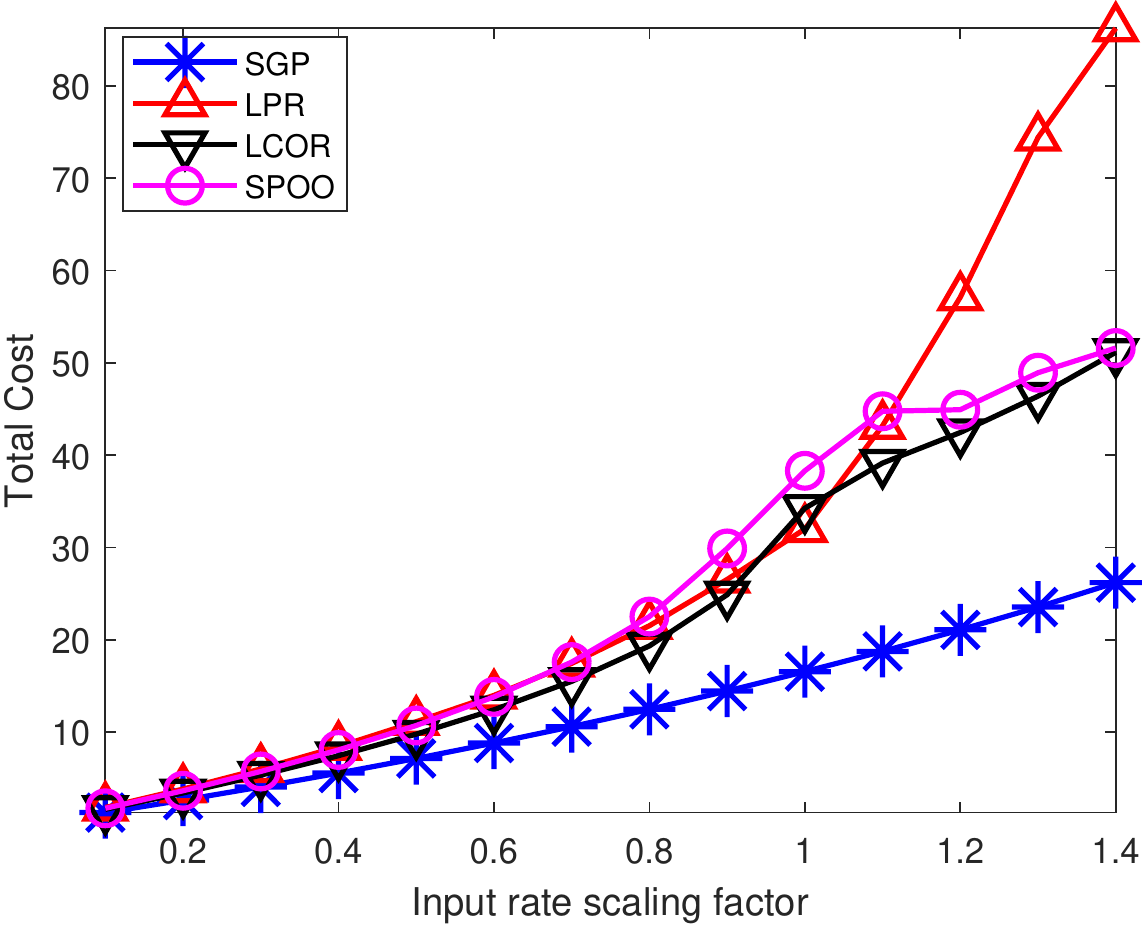}
         \caption{ Total cost versus scaled input rate}
         \label{fig:inputRate}
     \end{subfigure}
     \hfill
     \begin{subfigure}[b]{0.265\textwidth}
         \centering
         \includegraphics[width=\textwidth]{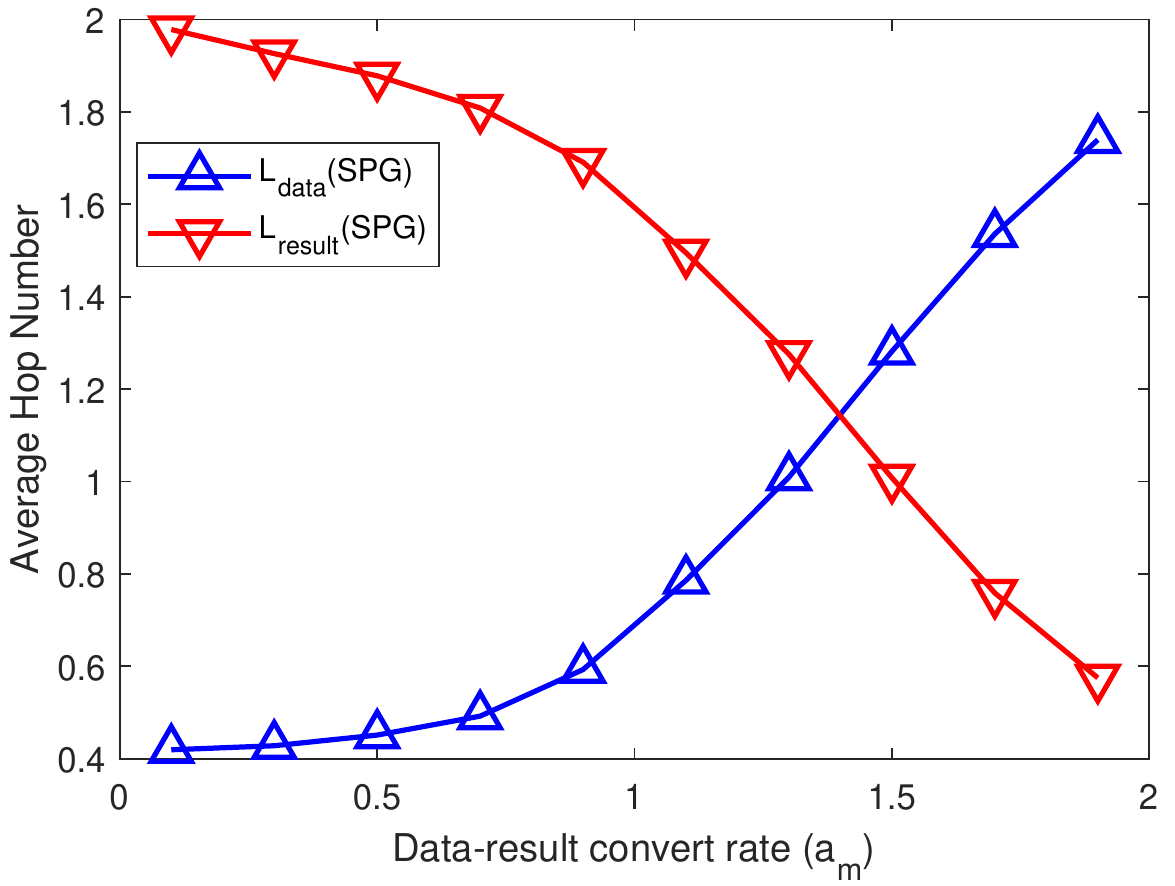}
         \caption{ $L_{\text{data}}$, $L_{\text{result}}$ and their ratio versus $a_m$}
         \label{fig:comploc}
     \end{subfigure}
      \vspace{-0.3\baselineskip}
        \caption{Topology and results in scenario \emph{Connected-ER}}
        
 \vspace{-1\baselineskip}
\end{figure*}

Table \ref{tab_scenario} also summarizes the number of nodes $|V|$ and edges $|E|$, as well as the number of 
tasks $|S|$ in each network.
We set $a_m$ to be exponential with mean value $0.5$ and truncated into interval $[0.1,5]$, considering that most computations have $a_m$ smaller that $1$, but special types like video rendering have relatively larger $a_m$.
Each task is randomly assigned with one computation type and one destination node, along with $|\mathcal{R}|$ random active data source (i.e. $r_{i}(d,m) > 0$). The input rate $r_{i}(d,m)$ of each active data source is 
chosen u.a.r.
in $[r_{\min},r_{\max}]$.
\textbf{Link} is the type of link cost $D_{ij}(\cdot)$, where \emph{Linear} denotes a linear link cost with unit cost $d_{ij}$, i.e. $D_{ij}(F_{ij}) = d_{ij}F_{ij}$, and \emph{Queue} denotes a queueing delay with link capacity $d_{ij}$, i.e. $D_{ij}(F_{ij}) = \frac{F_{ij}}{d_{ij} - F_{ij}}$.
\textbf{Comp} is the type of computation cost $C_i(\bf{G}_i)$, where \emph{Sum-Linear} denotes a weighted sum of linear cost for each type, i.e. $C_i(\boldsymbol{G_i}) = s_i\sum_{m}c_{im}g_{i}^m$, and \emph{Sum-Queue} denotes a queueing delay-like computation cost with capacity $s_i$, i.e. $C_i(\boldsymbol{G}_i) = \frac{\sum_{m} c_{im}g_{i}^m}{s_i - \sum_{m} c_{im}g_{i}^m}$, where the weights $c_{im}$ is 
u.a.r.
drawn from $[1,5]$.
The parameters $d_{ij}$ are 
u.a.r.
drawn from $[0, 2\Bar{d}_{ij}]$ and truncated into $[d_{\min},d_{\max}]$. Parameter $s_i$ are exponential random variables with mean $\Bar{s}_i$ truncated into $[s_{\min},s_{\max}]$ for \emph{Sum-Queue}, or uniform with mean $\Bar{s}_i$ for \emph{Sum-Linear}.

We implement the following baseline algorithms. 
Since 
this paper is the first to study joint routing and computation partial offloading in arbitrary network topologies with congestion-dependent cost and non-negligible result size, we make adaptation to these baselines to fit our model.

\noindent\textbf{GP}(Gradient Projection): similar to \texttt{SGP} but with the scaling matrices $M^-_{i}(d,m)$ and $M^+_{i}(d,m)$ being identity matrix multiplied by universal stepsize $0.1$. \texttt{GP} and \texttt{SGP} 
lead to the same global strategy but with different convergence speed.

\noindent\textbf{SPOO}(Shortest Path Optimal Offloading): 
fixes the routing variables $\phi^-_{ij}$,$\phi^+_{ij}$ with $j \neq 0$ 
to the shortest path (measured with marginal cost at $F_{ij} = 0$, accounting for the propagation delay without queueing effect), and studies the optimal offloading along these paths. Similar strategy is considered in \cite{he2021multi} with linear-topology and partial offloading. {Note that when destination and data source are the same node, \texttt{SPOO} is restricted to local computation.}

\noindent\textbf{LCOR}(Local Computation Optimal Routing): computes at the data sources (or with minimum offloading if pure local computation is not feasible), optimally route the result to destinations using scaled gradient projection in \cite{bertsekas1984second}.

\noindent\textbf{LPR}(Linear Program Rounded): the joint path-routing and offloading method by \cite{liu2020distributed}, which does not consider partial offloading, congestible links and result flow. 
To adapt \texttt{LPR}'s linear link costs to our schemes, we use the marginal cost at zero flow. 
To ensure sufficient communication resources for the result flow, we assign a saturate-factor of $0.7$ for queueing delay costs, i.e., the data flow could not exceed $0.7$ times real capacity.
Shortest path routing is used for result flow.



Fig.\ref{fig_bar} compares the total cost $D$ of different algorithms at the steady state over networks in Table \ref{tab_scenario} (we omit \texttt{GP} as it has the same steady state performance with \texttt{SGP}), where the bar heights of each scenario are normalized according to the worst algorithm. We test both linear cost and queueing delay with other parameters fixed in topology \texttt{SW}, labeled as \texttt{SW-linear} and \texttt{SW-queue}. 
Our proposed algorithm \texttt{SGP} significantly outperforms other baselines in all simulated scenarios, with more than $30\%$ improvement on average over \texttt{LPR}, which also jointly optimizes routing and task offloading but does not consider partial offloading and congestible links. 
The difference of case \texttt{SW-linear} and \texttt{SW-queue} suggests that our proposed algorithm promises a considerable improvement to SOTA especially when the networks are congestible.
Note that \texttt{LCOR} and \texttt{SPOO} reflects the optimal objective for routing and offloading subproblems, respectively. The gain of jointly optimizing over both strategies could be inferred by comparing \texttt{SGP} against \texttt{LCOR} and \texttt{SPOO}. For example, \texttt{LCOR} performs extremely bad in topology \emph{Balanced-tree}, because no routing could be optimized in a tree topology. 


We also perform refined experiments in \texttt{Connected-ER}, with the network topology and capacity shown in Fig.\ref{fig:topology}. There are $4$ major servers as labeled, and we assume server S1 fails (communication and computation capability disabled, stop sending data or making requests) at the $100$-th iteration.
We compare the convergence speed of \texttt{GP} and \texttt{SGP} in Fig.\ref{fig:convergence} subject to such server failure. 
\texttt{SGP} takes much less iterations to converge and adapt to topology change, showing the advantages of the sophisticatedly designed scaling matrices.

Fig.\ref{fig:inputRate} shows the change of total cost subject to universally scaled input rates $r_{i}(d,m)$, with other parameters fixed. The performance advantage of \texttt{SGP} has a rapid growth as the network getting more congested, especially against \texttt{LPR}.

To further illustrate why \texttt{SGP} outperforms baselines significantly with congestion-dependent cost, 
we define $L_{\text{data}}$ and $L_{\text{result}}$ as the average travel distance (hop number) of data packages from input to computation, and that of result packages from generation to being delivered, respectively.

In Fig. \ref{fig:comploc}, we compare $L_{\text{data}}$, $L_{\text{result}}$ for \texttt{SGP} 
over different universal $a_m$ with other parameters fixed.
The trajectories 
suggest that the average computation offloading distance grows with $a_m$. 
Namely, for tasks generating more result with unit input data, 
\texttt{SGP} tends to compute them nearer to the destination. 
Considering that
when $a_m \gg 1$, the network is highly congested mainly due to the result flow, thus the optimal strategy is to offload computation closer to destinations in order to reduce the result transmission distance and mitigate such congestion, in which case $L_{\text{data}}$ is large and $L_{\text{result}}$ is small.
Note that the trajectories also imply that the speed of growth of $L_{\text{data}}$ and descent of $L_{\text{result}}$ are low when $a_m$ is small. This is because when little result flow is generated, the network is lightly congested and the computation cost dominates the total cost, our algorithm tends to offload large portion of computation to the servers. However when $a_m$ is sufficiently large, the transmission cost overwhelms computation cost, then \texttt{SGP} speeds up shifting computation sites closer to destination.
The above behavior demonstrates the underlying optimality of our proposed method, namely reaching a ``balance'' among the cost for data forwarding, result forwarding and computation, and therefore optimizes the overall cost. 
As a comparison, \texttt{LPR} does not consider these aspects, the solution hardly changes with $a_m$, implying the cost of forwarding results could grow extremely high with large $a_m$.

\section{Conclusion} \label{Section:Conclusion}
We propose a novel joint routing and computation offloading model incorporating the result flow, partial offloading and multi-hop routing for both data and result. 
This is also the first flow model analysis of computation offloading adopting congestion-dependent link cost and arbitrary network topology.
We propose a total cost minimization problem to decide optimal routing-computation strategy. We optimally solve this non-convex problem by providing necessary and sufficient optimality conditions, and devise a fully distributed and scalable algorithm that reaches the global optimal.

\bibliographystyle{IEEEtran}
\bibliography{Reference}

\section*{Appendix}
\subsection*{Proof of Theorem \ref{Thm_Sufficient}}
For simplicity, we consider the non-destination nodes in this proof, namely we assume $\sum_{j} \phi_{ij}^+ = 1$ for all $i$, while the derivation is applicable to destination nodes.
We have 
\begin{align*}
    \frac{\partial D}{\partial r_i(d,m)} &= \sum_{j \in \left\{0\right\} \cup \mathcal{O}(i)} \phi_{ij}^-(d,m) \delta_{ij}^-(d,m) 
    \\ &= \sum_{j : \phi_{ij} > 0} \phi_{ij}^-(d,m) \lambda_{idm}^- 
    \\ &= \lambda_{idm}^-,
\end{align*}
and thus
\begin{align}
    \delta_{ij}^-(d,m) \geq \frac{\partial D}{\partial r_i(d,m)}, \forall i \in V, j \in \left\{0 \right\} \cup \mathcal{O}(i), \forall (d,m) \in S. \label{SuffCond_alt1}
\end{align}

Similarly we have
\begin{align}
    \delta_{ij}^+(d,m) \geq \frac{\partial D}{\partial t_i^+(d,m)}, \quad \forall i \in V, j \in \mathcal{O}(i), \forall (d,m) \in S. \label{SuffCond_alt2}
\end{align}

To prove $ \phi = (\boldsymbol{\phi}^-,\boldsymbol{\phi}^+)$ minimizes $D = \sum_{(i,j) \in E}D_{ij}(F_{ij}) + \sum_{i \in V}C_i(\boldsymbol{G_i})$, let $\phi^* = (\phi^{-*}, \phi^{+*}) \neq \phi$ be another set of variable, and with corresponding forwarding and computation flows $F_{ij}^*, \forall (i,j) \in E$ and $\boldsymbol{G_i}^*, \forall i \in V$.
Given both $\phi$ and $\phi^*$ are valid forwarding and computing scheme, we know $(F_{ij},\boldsymbol{G_i})$ and $(F_{ij}^*,\boldsymbol{G_i}^*)$ are in the feasible set of the flow model problem (\ref{JointProblem}), which is a convex polytope. 

\begin{align}
    \min_{f^-,f^+,g} \quad &\sum_{(i,j) \in E} D_{ij}(F_{ij}) + \sum_{i \in V}C_i(\boldsymbol{G_i}) \label{JointProblem}
    \\ \text{such that} \quad &  \text{(\ref{FlowConservation1.1}) to (\ref{FlowConservation2.2}) hold,} \nonumber
    \\ & g_i(d,m) \geq 0, \quad \forall i \in V, k \leq M, (d,m) \in S \nonumber
    \\ & \begin{cases}f_{ij}^-(d,m) \geq 0,
    \\ f_{ij}^+(d,m) \geq 0, 
    \end{cases}  \, \forall (i,j) \in E, (d,m) \in S \nonumber
\end{align}

Due to the convexity of the feasible set, for any $\mu \in [0,1]$, $\left( (1 - \mu) F_{ij} + \mu F_{ij}^* ,  (1 - \mu) \boldsymbol{G_i} + \mu \boldsymbol{G_i}^* \right)$ is also feasible for (\ref{JointProblem}), we then let
\begin{align*}
   D(\mu) &= \sum_{(i,j) \in E}D_{ij}(  (1 - \mu) F_{ij} + \mu F_{ij}^* ) 
   \\ &+ \sum_{i \in V}C_i((1 - \mu) \boldsymbol{G_i} + \mu \boldsymbol{G_i}^*). 
\end{align*}
Since $D$ is convex in $F_{ij}$ and $\boldsymbol{G}_{i}$, we know $D(\mu)$ is convex in $\mu$. Thus combining with the arbitrary choice of $\phi^*$, the sufficiency in Theorem \ref{Thm_Sufficient} is proved if $\frac{d D(\mu)}{d \mu}$ is non-negative at $\mu = 0$. That is, we will show the following is non-negative
\begin{equation}
\begin{aligned}
    \frac{d D(\mu)}{ d \mu} \bigg|_{\mu = 0} &= \sum_{(i,j) \in E}D^\prime_{ij}(F_{ij}) (F_{ij}^* - F_{ij}) 
    \\ &+ \sum_{i \in V} \sum_{m \leq M} \frac{\partial C_i (\boldsymbol{G_i})}{\partial g_i^m} (g_i^{m*} - g_i^{m}) 
\end{aligned}    
\label{proof_obj}
\end{equation}

Starting with the data flow, multiply both side of (\ref{SuffCond_alt1}) by $\phi_{ij}^{-*}(d,m)$ and sum over $j \in \left\{ 0 \right\} \cup \mathcal{O}(i) $, we have
\begin{equation}
\begin{aligned}
    &\frac{\partial C_i(\boldsymbol{G_i})}{\partial g_i^m} \phi_{i0}^{-*}(d,m) + \sum_{j \in \mathcal{O}(i)} D_{ij}^\prime(F_{ij}) \phi_{ij}^{-*}(d,m) 
    \\ \geq &\frac{\partial D}{\partial r_i(d,m)} - a_m \frac{\partial D}{\partial t_i^+(d,m)} \phi_{i0}^{-*}(d,m) 
    \\ &- \sum_{j \in \mathcal{O}(i)} \frac{ \partial D}{\partial r_j(d,m)} \phi_{ij}^{-*}(d,m),
\end{aligned}    
\label{proof_ineqToEq1}
\end{equation}
then multiply both side by $t_i^{-*}(d,m) = \sum_{j \in \mathcal{I}(i)}f_{ji}^{-*}(d,m) + r_i(d,m)$, we have
\begin{align*}
     &\frac{\partial C_i(\boldsymbol{G_i})}{\partial g_i^m} g_i^{*}(d,m) + \sum_{j \in \mathcal{O}(i)} D_{ij}^\prime(F_{ij}) f_{ij}^{-*}(d,m)
    \\ \geq &  t_i^{-*}(d,m) \frac{\partial D}{\partial r_i(d,m)} - a_m \frac{\partial D}{\partial t_i  ^+(d,m)} t_i^{-*}(d,m) \phi_{i0}^{-*}(d,m) 
    \\ - &\sum_{j \in \mathcal{O}(i)} \frac{ \partial D}{\partial r_j(d,m)} t_i^{-*}(d,m) \phi_{ij}^{-*}(d,m),
\end{align*}
further sum over $(d,m) \in S$ and $j \in V$, we get
\begin{equation}
\begin{aligned}
    &\sum_{i \in V} \sum_{m \leq M} \frac{\partial C_i(\boldsymbol{G_i})}{\partial g_i^m} g_i^{m*} + \sum_{(i,j) \in E} D_{ij}^\prime(F_{ij}) F_{ij}^{-*}
    \\ &\geq \sum_{i \in V} \sum_{(d,m) \in S} t_i^{-*}(d,m) \frac{\partial D}{\partial r_i(d,m)} 
    \\ &- \sum_{i \in V} \sum_{(d,m) \in S}  a_m  \frac{\partial D}{\partial t_i^+(d,m)} t_i^{-*}(d,m) \phi_{i0}^{-*}(d,m)
    \\  &- \sum_{(d,m) \in S} \sum_{i \in V} \sum_{j \in \mathcal{O}(i)} \frac{ \partial D}{\partial r_j(d,m)} t_i^{-*}(d,m) \phi_{ij}^{-*}(d,m),
\end{aligned}   
\label{proof_sum_1}
\end{equation}
where $F_{ij}^{-*} = \sum_{(d,m) \in S} f_{ij}^{-*}(d,m)$.

Meanwhile, by the flow conservation (\ref{FlowConservation1.1}) to (\ref{FlowConservation2.2}), we know that for all $j \in V, (d,m) \in S$,
\begin{align*}
    \sum_{i \in \mathcal{I}(j)} t_i^{-*}(d,m) \phi_{ij}^{-*}(d,m) = t_j^{-*}(d,m) - r_j(d,m).
\end{align*}
Substitute above into the very last term in (\ref{proof_sum_1}) and cancel, we get
\begin{equation}
\begin{aligned}
    &\sum_{i \in V} \sum_{m \leq M} \frac{\partial C_i(\boldsymbol{G_i})}{\partial g_i^m} g_i^{m*} + \sum_{(i,j) \in E} D_{ij}^\prime(F_{ij}) F_{ij}^{-*} 
    \\  & \geq \sum_{i \in V} \sum_{(d,m) \in S} r_i(d,m)\frac{\partial D}{\partial r_i(d,m) } 
    \\ &- \sum_{i \in V} \sum_{(d,m) \in S} a_m g_i^{m*} \frac{\partial D}{\partial t_i^+(d,m)}.
\end{aligned}
\label{proof_sum_1_final}
\end{equation}

Next, about the flow of computation result, multiply both side of (\ref{SuffCond_alt2}) by $\phi_{ij}^{+*}(d,m)$ and sum over $j \in \mathcal{O}(i) $, we have
\begin{equation}
\begin{aligned}
    &\sum_{j \in \mathcal{O}(i)}D_{ij}^\prime(F_{ij}) \phi_{ij}^{+*}(d,m) 
    \\ \geq & \frac{\partial D}{\partial t_i^+(d,m)} - \sum_{j \in \mathcal{O}(i)} \frac{\partial D}{\partial t_j^+(d,m)} \phi_{ij}^{+*}(d,m).
\end{aligned}    
\label{proof_ineqToEq2}
\end{equation}

Multiply both side by $t_i^{+*}(d,m) = \sum_{j \in \mathcal{I}(i)}f_{ji}^{+*}(d,m) + a_m g_i^{*}(d,m)$, sum over $(d,m) \in S$ and $j \in V$, we get
\begin{equation}
\begin{aligned}
    &\sum_{(i,j) \in E}D_{ij}^\prime(F_{ij}) F_{ij}^{+*} 
    \\ &\geq \sum_{i \in V} \sum_{(d,m) \in S} t_i^{+*}(d,m) \frac{\partial D}{\partial t_i^+(d,m)} 
    \\ &- \sum_{(d,m)\in S} \sum_{i \in V} \sum_{j \in \mathcal{O}(i)} t_i^{+*}(d,m) \frac{\partial D}{\partial t_j^+(d,m)} \phi_{ij}^{+*}(d,m) .
\end{aligned}    
 \label{proof_sum_2}
\end{equation}

By (\ref{FlowConservation1.1}) to (\ref{FlowConservation2.2}), we have for all $ j \in V, (d,m) \in S$,
\begin{align*}
    \sum_{i \in \mathcal{I}(j)} t_i^{+*}(d,m) \phi_{ij}^{+*}(d,m) = t_j^{+*}(d,m) - a_m g_j^{*}(d,m). 
\end{align*}

Substituting above into the very last term in (\ref{proof_sum_2}) and canceling, we get
\begin{align}
    \sum_{(i,j) \in E} D_{ij}^\prime(F_{ij}) F_{ij}^{+*} \geq \sum_{i \in V} \sum_{(d,m) \in S} a_m g_i^{m*} \frac{\partial D}{\partial t_i^+(d,m)}.
     \label{proof_sum_2_final}
\end{align}

Summing up both side of (\ref{proof_sum_1_final}) and (\ref{proof_sum_2_final}), we have
\begin{equation}
\begin{aligned}
     &\sum_{i \in V} \sum_{m \leq M} \frac{\partial C_i(\boldsymbol{G_i})}{\partial g_i^m} g_i^{m*} + \sum_{(i,j) \in E} D_{ij}^\prime(F_{ij}) F_{ij}^{*} 
     \\ & \geq \sum_{i \in V} \sum_{(d,m) \in S} \frac{\partial D}{\partial r_i(d,m)} r_i(d,m).
\end{aligned}    
 \label{proof_ineq}
\end{equation}

Note that the equality would always hold in (\ref{proof_ineqToEq1}) and (\ref{proof_ineqToEq2}) if we substitute $\phi^*$ with $\phi$ in the above reasoning, as a consequence of (\ref{partial_D_r}) and (\ref{partial_D_t}).
Thus we have the following analogue of (\ref{proof_ineq}),
\begin{equation}
\begin{aligned}
    &\sum_{i \in V} \sum_{m \leq M} \frac{\partial C_i(\boldsymbol{G_i})}{\partial g_i^m} g_i^{m} + \sum_{(i,j) \in E} D_{ij}^\prime(F_{ij}) F_{ij}
    \\& = \sum_{i \in V} \sum_{(d,m) \in S} \frac{\partial D}{\partial r_i(d,m)} r_i(d,m).
\end{aligned}    
\label{proof_eq}
\end{equation}

Abstracting (\ref{proof_eq}) from (\ref{proof_ineq}), we show (\ref{proof_obj}) and complete the proof.

\end{document}